\algrenewcommand\Return{\State \algorithmicreturn{} } 
\newtheorem{theorem1}{Theorem}
\newtheorem{proposition}[theorem1]{Proposition}
\newtheorem{theorem2}{Theorem}
\newtheorem{definition}[theorem2]{Definition}
\newtheorem{theorem3}{Theorem}
\newtheorem{corollary}[theorem3]{Corollary}
\newtheorem{theorem5}{Theorem}
\newtheorem{assumption}[theorem5]{Assumption}
\begin{document}
\title{ A Satisfactory Power Control for 5G Self-Organizing Networks}
\author{Hajar El Hammouti,~\IEEEmembership{Student Member,~IEEE,}
        Essaid Sabir,~\IEEEmembership{Senior Member,~IEEE,}
        and~Hamidou Tembine,~\IEEEmembership{Senior Member,~IEEE}
\thanks{Hajar El Hammouti is with the Department of Systems, Telecommunications, Networks, and Services (STRS), National Institute of Posts and Telecommunications, Rabat, Morocco. E-mail: elhammouti@inpt.ac.ma.}
\thanks{Essaid Sabir is with Networking Systems and Telecommunications (NEST) Research group, National School of Electricity and Mechanics, Casablanca, Morocco. E-mail: e.sabir@ensem.ac.ma.}
\thanks{Hamidou Tembine is with Learning \& Game Theory Lab, New York University Abu Dhabi, United Arab Emirates. E-mail:tembine@nyu.edu.}}


\maketitle

\begin{abstract}
Small-Cells are deployed in order to enhance the network performance by bringing the network closer to the user. However, as the number of low power nodes grows increasingly, the overall energy consumption of the Small-Cells base stations cannot be ignored. A relevant amount of energy could be saved through several techniques, especially power control mechanisms. In this paper, we are concerned with energy-aware self-organizing networks that guarantee a satisfactory performance. We consider satisfaction equilibria, mainly the efficient satisfaction equilibrium (ESE), to ensure a target quality of service (QoS) and save energy. First, we identify conditions of existence and uniqueness of ESE under a stationary channel assumption. We fully characterize the ESE and prove that, whenever it exists, it is a solution of a linear system. Moreover, we define satisfactory Pareto optimality and show that, at the ESE, no player can increase its QoS without degrading the overall performance. Under a fast fading channel assumption, as the robust satisfaction equilibrium solution is very restrictive, we propose an alternative solution namely the “long term satisfaction
equilibrium”, and describe how to reach this solution efficiently. Finally, in order to find satisfactory solution per all users, we propose fully distributed strategic learning schemes based on Banach-Picard, Mann and Bush-Mosteller algorithms, and show through simulations their qualitative properties.


\end{abstract}
\begin{IEEEkeywords}
Banach-Picard algorithm, Bush-Mosteller algorithm, efficient satisfaction equilibrium, expected robust game, game theory, long term satisfaction equilibrium, Mann iterates, satisfaction equilibrium, self-organizing networks, ultra-densification, 5G. 
\end{IEEEkeywords}
\IEEEpeerreviewmaketitle
\section{Introduction}
According to the Cisco visual network index (VNI) report~\cite{Cisco2016}, the monthly global mobile data traffic has reached 3.7 exabytes in 2015, and is expected to increase nearly eightfolds attaining 30.6 exabytes by 2020. In order to cope with this sheer volume of data traffic, a natural move to the next generation of wireless communication systems (5G) is needed. This is achieved by involving key technologies~\cite{Jeffrey2014,Talwar2014} including networks ultra-densification. The main driver behind ultra-densification is to substantially increase existing macro-cellular networks capacity~\cite{Osseiran2014,Kang2013,Bennis2016}. Ultimately, forecasts predict that by the time 5G comes to fruition, there will be more base stations (BSs) than mobile handsets~\cite{Jeffrey2013}.

The growing number of BSs, typically low power nodes (micro, pico, and femto base stations), also called Small-Cells~\cite{Whitepaper}, gives rise to many new challenges. Especially, those related to heterogeneity, optimization, and scaling. Clearly, as the number of heterogeneous Small-Cells grows tremendously, the amount of interferences will increase significantly~\cite{Dulaimi2015} leading to the so-called \textit{curse of dimensionality}~\cite{Curse}. The amount of data, that is exchanged between BSs in order to handle interferences, raises exponentially with the number of interfering BSs (dimensionality) resulting in a heavily loaded network (mainly with signaling messages).

Self-organizing networks (SON) are by far the most important approach to rise above dimensionality issues related to Small-Cells deployments~\cite{SON}. Not only do self-organizing networks enable to the network an automated resource management and improve BSs coordination, but SON can significantly reduce operators CAPEX (capital expenditures) and OPEX (operating expenses), especially by reducing human intervention and optimizing available resources.

However, as the number of low power nodes grows increasingly, the overall energy consumption of Small-Cells base stations cannot be ignored. A relevant amount of energy could be saved~\cite{Tang2015,Wu2015}. Mechanisms for energy-aware nodes are more than desirable. SON should implement energy saving techniques in order to prolong the lifetime of the batteries and increase the energy efficiency.




	%

In order to achieve energy efficiency, several techniques such as sleep mode optimizations~\cite{Ashraf2010}, power control mechanisms~\cite{Holtkamp2013}, and learning algorithms can be used. In this work, we are interested in achieving energy efficiency through a satisfaction mechanism. Practically, instead of achieving the best network performance by maximizing QoS, which is generally energy costly, nodes can only target satisfactory QoS levels, and hence, work efficiently. The choice of satisfactory QoS levels is merely endorsed by the two following reasons:

\begin{enumerate}
	\item First, as the number of base stations is highly increasing, probably overtaking the number of mobile devices in the next few years, the average number of devices per cell will decrease significantly. In such a context, optimization based on the full buffer traffic assumption, i.e. the user always asks for maximizing its QoS, cannot hold true. Unfortunately, this case can lead to resource consuming requests, while saving energy is possible without any significant deterioration of the user-perceived rate~\cite{Alvarez2016}.
	\item Second, many mobile applications (e.g. real-time services such as video conferencing and online gaming) require only fixed data rates in order to run properly. Assuming additional QoS demand will convey wasteful resources~\cite{Wang2007}. 
\end{enumerate}
 The main contribution of this paper is to present a novel approach while dealing with energy efficiency in SON. 
Our paper addresses the following questions: how to reach a target QoS while minimizing energy consumption? Given a realistic wireless framework, how to select the most efficient power allocation that meets with users expectations?

\subsection{Related work}
To answer these questions \textit{efficiently} and \textit{satisfactorily}, a game theoretical approach is adopted. Game theory provides powerful tools~\cite{Saad2016} and gives clear insights on interacting nodes behaviors. 

One of the well-known game theoretical solution concepts is the \textit{Nash equilibrium} (NE). NE is a strategy profile where no player has the incentive to deviate unilaterally. It has been shown that the NE generally fails to model the network performance. Indeed, when each player acts selfishly by increasing its power, subsequent interferences increase driving the network to a suboptimal situation.


Alternatively, in order to support the users QoS requirements, the  \textit{constrained Nash equilibrium} (CNE, also called \textit{generalized Nash equilibrium}) is introduced~\cite{Debreu1952,Menache2008,Sabir2009}. Particularly, constrained games are concerned with payoffs maximization (or minimization) subject to coupled and/or orthogonal constraints over the players strategies and/or payoffs. Hence, at a CNE, each player aims at achieving its optimal utility while satisfying QoS constraints. The CNE is designed to accommodate with QoS requirements that the NE fails to model. Nevertheless, from a practical point of view, the CNE can be a very restrictive solution that (i) reduces the set of players strategies, and (ii) requires costly efforts. More precisely, in order to reach the highest payoffs, greater efforts, such as higher powers, are generally needed. This may lead to a lower energy efficiency and cost effectiveness of the network. 

Consequently, a less restrictive solution concept, namely \textit{satisfaction equilibrium} (SE), has been introduced~\cite{Simon1956,Meriaux2012}. Mainly, in a less restrictive framework, players can only target satisfactory QoS levels without aiming at achieving the highest payoffs. At an SE, utilities optimization assumption is relaxed. The payoffs should only be above given thresholds based on users services requirements. Hence, (i) energy costs related to payoffs maximization are saved, and (ii) players constraints are satisfied. Note that the CNE can also be seen as an SE of a satisfaction game, since players constraints are always satisfied at a CNE~\cite{Perlaza2012}. Yet, the reverse is not necessarily true. 

The energy efficiency function was introduced by Meshkati \textit{et al.} in~\cite{Meshkati2005}. This function measures the performance of the network per Joule of consumed energy. For each node, the ratio: $\frac{\text{QoS}}{\text{Consumed energy}}$ is reduced. The work in~\cite{Saad2016Bis} investigates energy efficiency in ultra-dense networks through joint power control and users scheduling. The problem of energy efficiency maximization is formulated as a dynamic stochastic game and cast as a mean-field game. The authors show that the mean-field equilibrium saves energy and reduces outage probability. In~\cite{MIMO}, the energy efficiency for multiple input multiple output antennas heterogeneous networks is studied using a non-cooperative and cooperative power control game. The authors propose a power allocation algorithm in order to reach Pareto optimal solutions.


Authors in~\cite{TangJSAC} transform the energy efficiency problem maximization to a power minimization problem. A power allocation algorithm based on semidefinite programming is therefore proposed to achieve energy efficiency. In~\cite{Debbah}, Lakshminarayana \textit{et al.} are also interested in power minimization. They present the problem of saving energy in Small-Cells under time constraints. The impact of delayed information is analyzed and a Lyapanov optimization is proposed.

Although game theory has been widely adopted to model network dynamics, very little attention has been paid to satisfactory solutions.
 In~\cite{Perlaza2012}, Perlaza \textit{et al.} investigate conditions for existence of satisfaction equilibria in a general framework of
QoS provisioning in self-configuring networks. Authors also study conditions for existence and uniqueness of efficient SE (ESE). Sufficient conditions for convergence of distributed algorithms that reach SE are presented by Meriaux \textit{et al.} in~\cite{Meriaux2012}. Authors in~\cite{Perlaza2012Bis} present convergence analysis to an SE, ESE, and satisfactory solutions that are not equilibria. 



\subsection{Contribution}
In this work, we address the problem of energy-aware user satisfaction in SON. Our main objective is to meet the users requirements while reducing the energy consumption. Accordingly, we aim at seeking for satisfaction equilibria, mainly the ESE in both stationary and random channels context.

For this paper, while we refer to earlier works~\cite{Perlaza2012} and~\cite{Perlaza2012Bis}, the focus is different:
\begin{enumerate}
	\item First, when the channels follow a slow fading model:
\begin{itemize}
	\item we fully characterize the ESE while considering information-theoretic transmission rate based measure, and prove that, whenever it exists, it is the unique solution of a given linear system.
\item We prove that satisfactory QoS levels and channel states are correlated to the transmit powers. For a desired QoS level, the users need to adjust their powers according to their channel states. Specifically, when the channel state decreases, the required transmit power increases.
\item We introduce ``satisfactory Pareto optimality'' property and show that the ESE is satisfactory Pareto optimal. Mainly, at the ESE, no user can increase its payoff without dissatisfying its opponents.
 
\end{itemize}
\item Second, when the channels are fast varying over time:
\begin{itemize}
	\item we define the ``robust satisfaction equilibrium'' and introduce a novel solution concept, namely the “long term satisfaction equilibrium”, which is less restrictive, and hence, more energy efficient.
	\item In order to achieve energy efficiency, we seek for power allocations that minimize the channels variance while averagely respecting users requirements. 
	
\end{itemize}
 \item One of the main contributions of this paper is that we propose fully distributed algorithms: (i) based on Banach-Picard iterations in order to reach the ESE for continuous power states and stationary channels, (ii) a modified version of Banach-Picard referred to as ``the progressive Banach-Picard algorithm for capacity discovery'' to reach the maximum network capacity, (iii) an updated version of Bush-Mosteller algorithm in order to converge to the ESE when the channels are stationary and the power states spaces are discrete, (iv) and Mann iterates based algorithm for fast varying channels in order to reach the efficient long term satisfaction
equilibrium.
\item Finally, we show the qualitative properties of the proposed algorithms through simulation results. 

\end{enumerate}

\subsection{Structure}
The remainder of the paper is organized as follows. The next section presents the system model. Section~\ref{Stationary} formulates the problem as a satisfaction game under a stationary channel assumption and describes some properties related to the set of satisfacion equilibria. A full characterization of ESE is also provided and the Pareto optimality property is studied. In section~\ref{Random}, under channel randomness assumption, satisfactory solutions are investigated. In section \ref{Fully}, we present distributed schemes in order to reach efficient solutions for both stationary and random channels. Section \ref{Simulations} presents numerical results to illustrate the performance of the studied algorithms. Finally, in the last section, we make a few concluding remarks. The proof of proposition \ref{propo1} is given in Appendix and important notations are summarized in Table~\ref{tab1}.
\section{System Model }\label{SystemModel}
Consider a self-organizing network where $N$ heterogeneous mobile stations (MSs) communicate with a common concentration point (e.g. a base station). We denote by $\mathcal{U}$ the set of (MSs). Users operate over the same radio channel. The radio channel is first supposed stationary, and, then, fast varying within time. The channel is considered slow or fast varying depending on the time the channel states change compared to the coherence time. We denote by $h_i$ the power fading gain between user $i$ and the concentration point. Each user selects its transmit power within a bounded power space that we denote by $\mathcal{P}^i$. 
Each user $i$ aims at guaranteeing a target throughput $\theta_i$. We denote by $r_{i}$ the bandwidth-normalized instantaneous throughput of node $i$, 
\begin{equation}\label{th}
r_i(\textbf{P})=\log_2(1+\gamma_i(\textbf{P})),
\end{equation}
	\noindent where $\gamma_{i}$ denotes the instantaneous signal-to-interference-and-noise-ratio (SINR) of user $i$,
	\begin{equation}
	\gamma_{i}(\textbf{P})=\frac{h_{{i}}P_i}{\eta+\sum \limits_{j \in \mathcal{U} \backslash\{i\}} h_{{j}}P_j},
	\end{equation}
 \noindent	with $\textbf{P}$ the power vector that encompasses all MSs transmit powers. $P_i$ is the power selected by user $i$, and $\eta$ is the variance of a Gaussian random variable that represents the additive Gaussian noise. 

Under a fast fading channel assumption, only expected throughput can be considered
\begin{equation}\label{Expected}
\bar{r}_{{i}}(P)\quad=\mathbb{E}_{h_1,\dots,h_{{N}}}\left[\log(1+\gamma_{i}(P))\right],
\end{equation}
\noindent  where~$\mathbb{E}_{h_1,\dots,h_{{N}}}[.]$ denotes the mathematical expectation with respect to the joint random variable~${h_1,\dots,h_{{N}}}$.

The main notations of this work are summarized in Table~\ref{tab1}.
\begin{table}[t]
\begin{tabular}{l l}
\hline
Symbol & Meaning\\
\hline
$\mathcal{U}$& set of MSs\\
$N$& number of MSs\\
$\mathcal{P}^i$& power states space related to user $i$\\
$\textbf{P}$ & power vector, it belongs to $\prod \limits_{i\in \mathcal{U}}\mathcal{P}^i$ \\
$P_i$ & power of MS $i$\\
$\gamma_i$ & instantaneous SINR related to MS $i$ \\
$\eta$ & additive Gaussian noise \\
$h_{{i}}$ & power fading gain between the concentration point and \\ & user~${i}$\\
$r_{{i}}$ & instantaneous throughput related to user~${i}$\\
$\bar{r}_{{i}}$ & expected throughput related to user~${i}$\\
$\theta_i$ & target throughput related to user $i$\\
$P^t$& $=\sum\limits_{j\in \mathcal{U}}P_jh_j$ the total received power at the \\& concentration point\\
$P^{max}_i$ & maximum power allowed to user $i$\\ 
$\mathbb{E}_{h_1,\dots,h_{{N}}}[.]$ & the mathematical expectation with respect to the joint\\ & random variable~${h_1,\dots,h_{{N}}}$.\\
$C$& maximum network capacity\\
$\mathcal{S}$& set of ESE\\
\hline
\end{tabular}
\caption{Summary of notations}
\label{tab1}
\end{table}

\section{Stationary Channel}\label{Stationary}

Here, we suppose a slow fading channel. Specifically, the channel remains constant during the coherence interval. Hence, only instantaneous throughput.
\subsection{Game formulation}

Suppose the non-cooperative game $\mathcal{G}~=~\{\mathcal{U},\{\mathcal{P}^i\}_{i \in\mathcal{U}},\{r_i\}_{i \in\mathcal{U}},\{\theta_i\}_{i\in \mathcal{U}}\}$ where $\mathcal{U}$ is the set of players, $\mathcal{P}^1,\dots, \mathcal{P}^N$ are the sets of pure strategies of each user.The payoff of each user $i$ $r_i$ is given by its throughput. We will separate the strategy of a node $i$, $P_i$, and its opponents by using the following notation $(P_i,\textbf{P}_{-i})$. Each user $i$ aims at achieving a minimum throughput $\theta_i$. Consequently, each user $i$ is satisfied with its strategy $P_i$ given the others strategies $\textbf{P}_{-i}$, if 
\begin{equation}
r_i(P_i,\textbf{P}_{-i})\geq \theta_i.
\end{equation}

\subsection{Satisfaction equilibrium}
In order to achieve satisfactory solutions, only a set of strategies is allowed for each user given its opponents strategies. We refer to these strategies as ``\textit{feasible strategies}''. Such a set can be characterized by a correspondence function $f_i:\mathcal{P}_{-i}\longrightarrow 2^{\mathcal{P}^i}$, where $\mathcal{P}_{-i}=\mathcal{P}^i\times \dots \times \mathcal{P}^{i-1} \times \mathcal{P}^{i+1}\times \dots \times \mathcal{P}^N$, and $2^{\mathcal{P}^i}$ is the set of all subsets of $\mathcal{P}^i$. For each $\textbf{P}_{-i}\in \mathcal{P}_{-i}$,  if $ P_i \in f_i(\textbf{P}_{-i})$, then $r_i(P_i,\textbf{P}_{-i})\geq \theta_i$. Accordingly, we define the set of satisfaction equilibria as follows. 

\begin{definition}[Satisfaction equilibrium]
A strategy profile $\textbf{P}^+$ is a \textit{satisfaction equilibrium} of a game $\mathcal{G}=\{\mathcal{U},\{\mathcal{P}^i\}_{i \in\mathcal{U}},\{r_i\}_{i \in\mathcal{U}},\{\theta_i\}_{i\in \mathcal{U}}\}$, if 
\begin{equation}\label{sat}
\forall i \in \mathcal{U}, \text{ } \forall P_i^+ \in f_i(\textbf{P}_{-i}^+),\text{ i.e., } r_i(P_i^+,\textbf{P}_{-i}^+)\geq \theta_i.
\end{equation}

\end{definition}

It is important to note that an SE exists, if there exists $\textbf{P}^+$ that satisfies constraints~(\ref{sat}), but also if the constraints are feasible, mainly, if  
\begin{equation}\label{Unfeasible}
\sum \limits_{i\in \mathcal{U}} \theta_i \leq \mathcal{C},
\end{equation}
where $\mathcal{C}$ is the maximum network capacity (which is also solution to the rate-maximization problem as described~\cite{Chee2013}). Hereafter, we assume the constraints feasibility and discuss SE existence. The impact of constraints feasibility will be discussed in section~\ref{Simulations}.
\begin{proposition}\label{Convex}
For each $i\in \mathcal{U}$, let $\mathcal{P}^i=[0,P_i^{max}]$ be the set of pure strategies of player $i$, $P_i^{max}$ is the maximum transmit power value allowed to $i$.

\noindent Then, the set of SE of game $\mathcal{G}$ is convex, closed and bounded.
\end{proposition}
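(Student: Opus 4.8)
The plan is to reduce each satisfaction constraint to an affine inequality in the power vector $\textbf{P}$, after which the statement follows from elementary stability properties of convex, closed and bounded sets. First I would observe that, since $x\mapsto\log_2(1+x)$ is strictly increasing, the constraint $r_i(P_i,\textbf{P}_{-i})\geq\theta_i$ is equivalent to $\gamma_i(\textbf{P})\geq 2^{\theta_i}-1$. Writing $\gamma_i^{\mathrm{th}}:=2^{\theta_i}-1$ and using that the interference-plus-noise term $\eta+\sum_{j\in\mathcal{U}\backslash\{i\}}h_jP_j$ is strictly positive (because $\eta>0$ and all $h_j,P_j\geq 0$), I can multiply through by this denominator without reversing the inequality and rewrite the requirement as
\begin{equation}
h_iP_i-\gamma_i^{\mathrm{th}}\sum_{j\in\mathcal{U}\backslash\{i\}}h_jP_j\;\geq\;\gamma_i^{\mathrm{th}}\,\eta ,
\end{equation}
which is an affine inequality in $\textbf{P}$. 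Hence the set $H_i$ of power vectors meeting user $i$'s target is a closed half-space.

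Next I would note that, by the definition of satisfaction equilibrium, a profile is an SE precisely when it belongs to $\prod_{i\in\mathcal{U}}\mathcal{P}^i=\prod_{i\in\mathcal{U}}[0,P_i^{max}]$ and satisfies all $N$ constraints simultaneously, so that the SE set equals
\begin{equation}
\Big(\prod_{i\in\mathcal{U}}[0,P_i^{max}]\Big)\cap\bigcap_{i\in\mathcal{U}}H_i .
\end{equation}
The box $\prod_{i\in\mathcal{U}}[0,P_i^{max}]$ is convex, closed and bounded; each $H_i$ is convex and closed. Since a finite intersection of convex sets is convex, an arbitrary intersection of closed sets is closed, and any subset of the box is bounded, the SE set is convex, closed and bounded, as claimed.

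The argument is essentially routine, and I do not expect a genuine obstacle; the only point requiring a little care is the passage from the ratio constraint on $\gamma_i$ to its affine form, which relies on the strict positivity of $\eta+\sum_{j\neq i}h_jP_j$ — this is precisely what keeps the division well defined and the inequality direction intact. Everything else is a direct appeal to the closure of convexity, closedness and boundedness under the relevant set operations.
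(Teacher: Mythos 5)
Your proof is correct and follows essentially the same route as the paper: reduce each satisfaction constraint to a linear (affine) inequality in $\textbf{P}$, so the SE set is an intersection of closed half-spaces with the bounded box $\prod_{i\in\mathcal{U}}[0,P_i^{max}]$, and convexity, closedness and boundedness follow from the standard stability of these properties under intersection. If anything, you make explicit the step the paper only asserts — multiplying the SINR constraint by the strictly positive interference-plus-noise term to obtain the affine form — so your write-up is a slightly more careful version of the same argument.
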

\begin{proof}
\begin{itemize}
\item[]
	\item	Convex: SE are solution to linear inequalities system, each feasible region (inequality) is convex~\cite{solo1980}, and the intersection of convex sets is convex.
\item	Closed: the closeness of the SE set stems from the soft inequalities. Each inequality is a union of two closed half planes, and the union and intersection of a finite number of closed sets is a closed set.
\item	Bounded: this property holds true because the set of power states is supposed bounded.
\end{itemize}
\end{proof}

\subsection{Efficient satisfaction equilibrium}\label{ESE}
We denote by $\mathcal{S}$ the set of SE. Whenever $\mathcal{S}$ is non-empty, we define the \textsl{efficient satisfaction equilibrium} (ESE) concept as provided in~\cite{Perlaza2012}.
\begin{definition}[Efficient satisfaction equilibrium]
An SE $\textbf{P}^+$ is said efficient if it minimizes the efficiency function $F(\textbf{P})=\sum \limits_{i\in \mathcal{U}} P_i$ with $\textbf{P} \in \mathcal{S}$.
\end{definition}
The next corollary prove the existence of an ESE when the set of SE is non-empty. 
\begin{corollary}
If the set of SE of the game $\mathcal{G}$ is non empty, an efficient satisfaction equilibrium exists.   
\end{corollary}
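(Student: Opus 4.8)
The plan is to obtain the result as an immediate application of the Weierstrass extreme value theorem, using Proposition~\ref{Convex} to supply compactness of the feasible set. First I would record the setting: each $\mathcal{P}^i=[0,P_i^{max}]$, so the joint strategy space $\prod_{i\in\mathcal{U}}\mathcal{P}^i$ is a closed bounded box in $\mathbb{R}^N$, and $\mathcal{S}\subseteq\prod_{i\in\mathcal{U}}\mathcal{P}^i$. By Proposition~\ref{Convex} the set $\mathcal{S}$ of SE is closed and bounded; since we are in the finite-dimensional space $\mathbb{R}^N$, the Heine--Borel theorem then gives that $\mathcal{S}$ is compact. By hypothesis $\mathcal{S}\neq\emptyset$.

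Next I would observe that the efficiency function $F(\textbf{P})=\sum_{i\in\mathcal{U}}P_i$ is linear in $\textbf{P}$, hence continuous on $\mathbb{R}^N$ and in particular continuous on $\mathcal{S}$. Applying the Weierstrass extreme value theorem — a continuous real-valued function on a nonempty compact set attains its infimum — yields a point $\textbf{P}^+\in\mathcal{S}$ with $F(\textbf{P}^+)=\min_{\textbf{P}\in\mathcal{S}}F(\textbf{P})$. Since $\textbf{P}^+\in\mathcal{S}$ it is a satisfaction equilibrium, and by construction it minimizes $F$ over $\mathcal{S}$, so it is by definition an efficient satisfaction equilibrium, which proves the corollary.

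There is no genuine obstacle here; the only points deserving care are bookkeeping ones. One must check that the hypotheses of Proposition~\ref{Convex} are in force — in particular the interval form $\mathcal{P}^i=[0,P_i^{max}]$ of the power spaces, which is what delivers closedness and boundedness of $\mathcal{S}$ — and one should recall the feasibility assumption made just before the proposition, namely $\sum_{i\in\mathcal{U}}\theta_i\leq\mathcal{C}$, since it is precisely this that guarantees $\mathcal{S}\neq\emptyset$ and keeps the statement non-vacuous. As an optional remark, convexity of $\mathcal{S}$ together with linearity of $F$ implies that the set of ESE is itself a convex (generally lower-dimensional) face of $\mathcal{S}$, foreshadowing the linear-system characterization given later; but this is not needed for the existence claim.
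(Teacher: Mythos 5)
Your proof is correct and follows essentially the same route as the paper: both rest on Proposition~\ref{Convex} giving a compact (closed, bounded) nonempty set $\mathcal{S}$ and on the existence of a minimizer of $F(\textbf{P})=\sum_{i\in\mathcal{U}}P_i$ over it. Your appeal to continuity of the linear $F$ plus the Weierstrass theorem is in fact slightly cleaner than the paper's appeal to convexity of $F$, but it is the same argument in substance.
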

\begin{proof}
The efficiency function $F(\textbf{P})=\sum \limits_{i\in \mathcal{U}} P_i$ with $\textbf{P} \in \mathcal{S}$ is convex with respect to $\textbf{P}$ over the set of SE which is convex and compact (Proposition~\ref{Convex}). Hence a minimum of $F$ exists.
\end{proof}

\begin{proposition}[ESE uniqueness and characterization]\label{uni}
Let $\mathcal{P}^i=[0,P_i^{max}]$ be the set of pure strategies of player $i$. Suppose the set of SE is non-empty. The ESE is unique and it is the solution to the linear system $\forall i \in \mathcal{U}$, $r_i(\textbf{P})=\theta_i$. 
\end{proposition}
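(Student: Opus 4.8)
The plan is to prove the statement in two stages: first, that every ESE must saturate all of the QoS constraints; second, that the resulting system of equalities has a unique solution.

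For the first stage I would rewrite each constraint $r_i(\textbf{P})\geq\theta_i$ in the equivalent linear form $\gamma_i(\textbf{P})\geq\tilde{\gamma}_i$ with $\tilde{\gamma}_i:=2^{\theta_i}-1$, i.e. $h_iP_i\geq\tilde{\gamma}_i\big(\eta+\sum_{j\neq i}h_jP_j\big)$, so that $\mathcal{S}$ is exactly the compact convex polytope cut out by these inequalities inside $\prod_{i\in\mathcal{U}}[0,P_i^{max}]$. Let $\textbf{P}^+$ be an ESE, that is, a minimizer of $F(\textbf{P})=\sum_{i\in\mathcal{U}}P_i$ over $\mathcal{S}$; such a point exists by the preceding corollary. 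Suppose for contradiction that the $k$-th constraint is slack, $r_k(\textbf{P}^+)>\theta_k$. Feasibility forces $P_i^+>0$ for every $i$ (if $P_i^+=0$ then $\gamma_i(\textbf{P}^+)=0$ and $r_i(\textbf{P}^+)=0<\theta_i$). Now observe that $\gamma_k$ is continuous and strictly increasing in $P_k$, while for every $j\neq k$ the quantity $\gamma_j$ is strictly decreasing in $P_k$ (as $P_k$ appears only in the interference term of $\gamma_j$). Hence decreasing $P_k^+$ by a sufficiently small $\varepsilon>0$ keeps $r_k\geq\theta_k$ (strict slack plus continuity), keeps all other constraints satisfied (those SINRs only increase), keeps the new coordinate $P_k^+-\varepsilon\geq 0$, and clearly does not touch the upper box bound. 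The perturbed profile thus lies in $\mathcal{S}$ and has strictly smaller value of $F$, contradicting the optimality of $\textbf{P}^+$. Therefore $r_i(\textbf{P}^+)=\theta_i$ for all $i\in\mathcal{U}$.

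For the second stage I would solve the linear system $r_i(\textbf{P})=\theta_i$ explicitly. Setting $x_i:=h_iP_i$ and $S:=\sum_{j\in\mathcal{U}}x_j$, the $i$-th equation reads $(1+\tilde{\gamma}_i)x_i=\tilde{\gamma}_i(S+\eta)$, so $x_i=\frac{\tilde{\gamma}_i}{1+\tilde{\gamma}_i}(S+\eta)$; summing over $i$ yields $S=\alpha(S+\eta)$ with $\alpha:=\sum_{i\in\mathcal{U}}\frac{\tilde{\gamma}_i}{1+\tilde{\gamma}_i}$. Since an ESE exists and, by the first stage, solves this system, the system is consistent, which rules out $\alpha\geq 1$ (the value $\alpha=1$ forces $0=\eta>0$, and $\alpha>1$ forces $S+\eta<0$, contradicting $x_i\geq0$). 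Hence $\alpha<1$, $S=\frac{\alpha\eta}{1-\alpha}$ is uniquely determined, and so is each $x_i$ and therefore each $P_i=x_i/h_i$ (giving in passing the closed form $P_i^+=\frac{\tilde{\gamma}_i}{h_i(1+\tilde{\gamma}_i)}\cdot\frac{\eta}{1-\alpha}$). Combining the two stages: an ESE exists, every ESE is a solution of the linear system $r_i(\textbf{P})=\theta_i$, and that system has exactly one solution; hence the ESE is unique and coincides with that solution, which is the claim.

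The delicate point is the perturbation argument of the first stage: one must get the monotonicity directions right — lowering a single user's power relaxes every other user's SINR constraint and tightens only that user's own — and confirm that the perturbed point stays inside the box $\prod_{i\in\mathcal{U}}[0,P_i^{max}]$, which is immediate here since a coordinate is only decreased and feasibility already precludes a zero coordinate. The remainder is elementary linear algebra.
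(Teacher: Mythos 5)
Your proof is correct, but it follows a genuinely different route from the paper. The paper starts from the solution $\textbf{P}^+$ of the linear system $r_i(\textbf{P})=\theta_i$ and argues globally by contradiction: it assumes some SE $\tilde{\textbf{P}}$ has strictly smaller total power, rewrites each satisfied constraint as a linear inequality in the received powers, sums these inequalities over the first $N-1$ users, and deduces that user $N$'s requirement must then be violated, so no SE can beat $\textbf{P}^+$ in total power. You instead argue locally at an arbitrary ESE: if some constraint were slack you could decrease that user's power slightly, which (by the monotonicity of the SINRs, i.e., lowering $P_k$ tightens only user $k$'s own constraint and relaxes everyone else's) stays feasible and strictly reduces $F$, so every ESE must saturate all constraints; then you solve the saturated system in the variables $x_i=h_iP_i$ and show it admits at most one admissible solution, since $\alpha=\sum_i\frac{\tilde{\gamma}_i}{1+\tilde{\gamma}_i}\geq 1$ is incompatible with $\eta>0$ and nonnegative powers. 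Your decomposition buys a cleaner treatment of uniqueness (the paper's summation argument establishes optimality of the system's solution but leaves uniqueness largely implicit), yields the closed-form allocation as a by-product (essentially re-deriving Proposition 4 of the paper with a simpler change of variables than the determinant computation in the Appendix), and avoids the paper's delicate index bookkeeping. Two small caveats: your parenthetical claim that feasibility forces $P_i^+>0$ for every $i$ implicitly assumes $\theta_i>0$ (though for the perturbation you only need $P_k^+>0$ at the slack index $k$, which follows from $r_k(\textbf{P}^+)>\theta_k\geq 0$), and ruling out $\alpha\geq 1$ uses $\eta>0$, which is consistent with the paper's noise model but worth stating explicitly.
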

\begin{proof}
Let $\textbf{P}^+$ be the solution to the linear system, such that $\forall i \in \mathcal{U}, r_i(\textbf{P}^+)=\theta_i$. We want to prove that $\textbf{P}^+$ minimizes the efficiency function $F(\textbf{P})=\sum \limits_{i\in \mathcal{U}} P_i$ with $\textbf{P} \in \mathcal{S}$.

\noindent Suppose by absurd that it exists $\tilde{\textbf{P}}\! \in \! \mathcal{S}$ such that \begin{equation}\sum \limits_{i\in \mathcal{U}}\!\! \tilde{P}_i\!~<~\!\sum \limits_{i\in \mathcal{U}}\!\! P^+_i.\end{equation} 

\noindent We want to prove that under this assumption, there is at least one player whose target throughput is not satisfied. Suppose that we have for each $i$ such that $1\!\leq\! i\leq \!N\!-\!1$

\begin{equation}\label{vrai}
r_i(\tilde{\textbf{P}})\geq \theta_i.
\end{equation}
In order to show that the requirement of the player $N$ is unsatisfied, meaning $r_N(\tilde{\textbf{P}})\!<\!\theta_N$, it is sufficient to prove that
\begin{equation}\label{Absurd}
\log_2(\!1+\!\frac{\tilde{P}_Nh_N}{\eta+\tilde{P}^t-\tilde{P}_kh_k}\!)\!<\!\log_2(\!1+\!\frac{\tilde{P}_Nh_N}{\eta+{P}^{t+}-{P}^+_Nh_N}\!), 
\end{equation}
with ${P}^t=\sum \limits_{i\in \mathcal{U}} P_ih_i$ the overall received power at the concentration point. 
In order to prove equation (\ref{Absurd}), it is sufficient to prove that 
\begin{equation}\label{Absurd2}
\frac{\tilde{P}_Nh_N}{\eta+\tilde{P}^t-\tilde{P}_Nh_N}<\frac{\tilde{P}_Nh_N}{\eta+{P}^{t+}-{P}^+_Nh_N}. 
\end{equation}
Or alternatively, it is sufficient to prove that
\begin{equation}\label{eq}
\eta(\tilde{P}_N-P_N^+)+\tilde{P}_NP^{t+}-P^+_N\tilde{P}^t<0.
\end{equation}
Note that equation~(\ref{eq}) is obtained after a few algebras from equation~(\ref{Absurd2}). 
\noindent Important to note that $\forall i, r_i(\textbf{P}^+)=\theta_i$. Hence, using the same previous reasoning, equation~(\ref{vrai}) becomes, $\forall 1\leq i\! \leq\! N-1$
\begin{equation}\label{vrai2}
\eta(\tilde{P}_i-P_i^+)+\tilde{P}_iP^{t+}-P^+_i\tilde{P}^t\geq 0.
\end{equation} 

\noindent By subsequent summation of inequalities described by equation~(\ref{vrai2}) from $i\! =\!1$ to $i\! =\!N\!-\!1$, we obtain

\begin{equation}
\eta(\tilde{P}^t\!\!-\! \tilde{P}_N\! h_N\!\!-\! P^{t+}\!+\! P_N^+\! h_N\! )\! +\! P^{t+}\! (\! \tilde{P}^t\!\!-\!\tilde{P}_N\!h_N\! )\!-\! \tilde{P}^t\! \!(\! P^{t+}\! -\! P_N^+\! h_N\! )\!\!\geq\!\! 0.
\end{equation}
After a few algebras, we obtain
\begin{equation}
\eta(\tilde{P}_N-P_N^+)+\tilde{P}_NP^{t+}-P^+_N\tilde{P}^t< \frac{\eta}{h_k}(\tilde{P}^t-P^{t+}).
\end{equation}

Since we supposed that $\sum \limits_{i\in \mathcal{U}} \tilde{P}_i < \sum \limits_{i\in \mathcal{U}} P^+_i$, then $\tilde{P}^t<P^{t+}$ (the total received power conserves the order as the channels gain are stationary and the same for both received powers).
Thus, 
\begin{equation}
\eta(\tilde{P}_N-P_N^+)+\tilde{P}_NP^{t+}-P^+_N\tilde{P}^t<0
\end{equation}
which is equivalent to say $r_N(\tilde{\textbf{P}})<\theta_N$, and this is a contradiction since $\tilde{\textbf{P}}$ is an SE. Thus, $\sum \limits_{i\in \mathcal{U}} \tilde{P}_i > \sum \limits_{i\in \mathcal{U}} P^+_i$,
and $\textbf{P}^+$ is the ESE.
\end{proof}
Proposition~\ref{uni} states that, whenever it exists, the ESE is met when all players are barely enough satisfied. The obtained result confirms intuitive expectations since only minimal efforts are needed to afford minimal satisfaction. Furthermore, the Pareto frontier is exactly defined by the efficient satisfaction equilibrium. The next definition describes the Pareto efficiency in satisfaction games. 
  
\begin{definition}[Satisfactory Pareto optimality]
In a satisfaction game, a strategy profile is Pareto optimal if no player can increase its payoff without dissatisfying its opponents.  
\end{definition}
The following corollary arises from the definition above.
\begin{corollary}[Pareto optimality]
Suppose the set of SE is non-empty and the power spaces are non-empty, convex and compact sets. The ESE exists and is satisfactory Pareto optimal. 
\end{corollary}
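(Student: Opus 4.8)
The corollary has two parts, and the plan is to get the first almost for free and the second by a one-line monotonicity estimate, using the tightness characterization of Proposition~\ref{uni} and introducing no new optimization. \textit{For existence:} each satisfaction constraint $r_i(\textbf{P})\ge\theta_i$ is equivalent to the affine inequality $h_iP_i\ge(2^{\theta_i}-1)\bigl(\eta+\sum_{j\neq i}h_jP_j\bigr)$, so the set $\mathcal{S}$ of SE is the intersection of $N$ closed half-spaces with $\prod_i\mathcal{P}^i$; since each $\mathcal{P}^i$ is non-empty, convex and compact, $\mathcal{S}$ is convex and compact, and it is non-empty by assumption, while the efficiency function $F(\textbf{P})=\sum_i P_i$ is continuous and hence attains its minimum on $\mathcal{S}$. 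That minimizer is the ESE $\textbf{P}^+$, which is exactly the earlier existence corollary.

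\textit{For satisfactory Pareto optimality:} by Proposition~\ref{uni}, $r_i(\textbf{P}^+)=\theta_i$ for every $i\in\mathcal{U}$, so at the ESE every user is exactly --- and only barely --- satisfied, with no slack. Fix a user $k$ and suppose it changes its power to some $P_k'$ so that its own payoff strictly increases, $r_k(P_k',\textbf{P}_{-k}^+)>r_k(\textbf{P}^+)=\theta_k$. Since for fixed opponents' powers $P_k\mapsto\gamma_k=\frac{h_kP_k}{\eta+\sum_{j\neq k}h_jP_j}$ is strictly increasing, this forces $P_k'>P^+_k$. Then for every opponent $j\neq k$,
\[
\gamma_j(P_k',\textbf{P}_{-k}^+)=\frac{h_jP^+_j}{\eta+h_kP_k'+\sum_{\ell\neq j,k}h_\ell P^+_\ell}<\frac{h_jP^+_j}{\eta+h_kP^+_k+\sum_{\ell\neq j,k}h_\ell P^+_\ell}=\gamma_j(\textbf{P}^+),
\]
because the denominator strictly increased, whence $r_j(P_k',\textbf{P}_{-k}^+)<r_j(\textbf{P}^+)=\theta_j$ and user $j$ becomes dissatisfied. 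Since $j$ was arbitrary, every opponent of $k$ is dissatisfied; in particular $\textbf{P}^+$ admits no deviation that raises some player's payoff while keeping that player's opponents satisfied, which is precisely the claimed satisfactory Pareto optimality.

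The point that needs care --- and where I would be most careful --- is that the tightness identity $r_i(\textbf{P}^+)=\theta_i$ is proved in Proposition~\ref{uni} for the interval model $\mathcal{P}^i=[0,P_i^{max}]$, whereas here the $\mathcal{P}^i$ are only assumed convex and compact; the argument used there (trim a little power off a user that has slack, in order to lower $F$) needs $\mathcal{P}^i$ to contain points just below $P^+_i$. I would handle this either by keeping the interval model and citing Proposition~\ref{uni} directly, or by noting that the tightness argument still goes through for any user whose equilibrium power is interior to $\mathcal{P}^i$, which is exactly the case that matters when that user wants to increase its payoff. Everything else reduces to the short SINR monotonicity computation displayed above.
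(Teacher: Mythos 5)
Your proposal is correct and follows essentially the same route as the paper: the paper's (much terser) proof likewise argues that at the ESE all payoffs are tight at their thresholds, so any power increase by one user raises interference and pushes every opponent below its requirement, and your existence argument simply restates the paper's earlier compactness-plus-continuity corollary. Your added care about where the tightness identity $r_i(\textbf{P}^+)=\theta_i$ from Proposition~\ref{uni} applies is a reasonable refinement the paper glosses over, but it does not change the substance of the argument.
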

\begin{proof}
At the ESE, given the other players strategies, if a player $i$ decides to increase its payoff by increasing its transmit power, the interferences increase, and thus all the other players payoffs get deteriorated, mainly, they will be under users requirements thresholds. 
\end{proof}
In order to reach their QoS requirements, users should adapt their transmission powers according to their channel states. We show, through the following proposition, the impact of the channel state and the users demands on the powers decision-making. 
\begin{proposition}\label{order}
Let $\textbf{P}^+$ be the ESE of the game $\mathcal{G}$. If, for users $i$ and $j$, $\theta_i \leq \theta_j$, then $P_i^+h_i \leq P_j^+h_j$.
\end{proposition}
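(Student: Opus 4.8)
The plan is to lean entirely on the characterization of the ESE already established in Proposition~\ref{uni}: at $\textbf{P}^+$ every QoS constraint is tight, i.e. $r_k(\textbf{P}^+)=\theta_k$ for all $k\in\mathcal{U}$. Since $r_k(\textbf{P})=\log_2(1+\gamma_k(\textbf{P}))$ and $\log_2(1+\cdot)$ is strictly increasing, this is equivalent to $\gamma_k(\textbf{P}^+)=2^{\theta_k}-1$ for every $k$. Thus the proposition reduces to comparing SINRs: it suffices to show that $\gamma_i(\textbf{P}^+)\le\gamma_j(\textbf{P}^+)$ forces $P_i^+h_i\le P_j^+h_j$.

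To that end I would rewrite the SINR in terms of the \emph{received} powers. Set $x_k:=h_kP_k^+$ for the power received at the concentration point from user $k$, and let $P^{t+}:=\sum_{l\in\mathcal{U}}x_l$ be the total received power, which does not depend on $k$. Then
\[
\gamma_k(\textbf{P}^+)=\frac{x_k}{\eta+P^{t+}-x_k}.
\]
The crucial elementary fact is that the map $\varphi(x)=\dfrac{x}{\eta+P^{t+}-x}$ is strictly increasing on $[0,\eta+P^{t+})$ (its derivative equals $(\eta+P^{t+})/(\eta+P^{t+}-x)^2>0$), and each $x_k$ lies in this interval because $\eta+P^{t+}-x_k=\eta+\sum_{l\ne k}x_l>0$. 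Hence $\varphi$ is invertible there and order-preserving in both directions.

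Putting the pieces together: the function $\theta\mapsto 2^{\theta}-1$ is increasing, so $\theta_i\le\theta_j$ yields $\gamma_i(\textbf{P}^+)=2^{\theta_i}-1\le 2^{\theta_j}-1=\gamma_j(\textbf{P}^+)$, that is $\varphi(x_i)\le\varphi(x_j)$; applying $\varphi^{-1}$ gives $x_i\le x_j$, i.e. $P_i^+h_i\le P_j^+h_j$. There is no real obstacle in this argument — the only points that deserve an explicit line are the invocation of Proposition~\ref{uni} to make all constraints active (so that we work with equalities $\gamma_k=2^{\theta_k}-1$ rather than inequalities) and the verification that $\varphi$ is strictly increasing with positive denominator, which legitimizes the inversion.
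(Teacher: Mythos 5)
Your proposal is correct and follows essentially the same route as the paper: invoke the tightness of all constraints at the ESE (Proposition~\ref{uni}) and then exploit monotonicity of the throughput/SINR viewed as a function of the received power $x_k=h_kP_k^+$ with the total received power held fixed. Your version is in fact slightly cleaner, since you separate the increasing maps $\varphi$ and $\log_2(1+\cdot)$ and give the correct derivative $(\eta+P^{t+})/(\eta+P^{t+}-x)^2$, whereas the paper differentiates the composite $g$ directly (with a misprinted derivative) but reaches the same monotonicity conclusion.
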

\begin{proof}
The proof stems from the increasing nature of the payoff functions. Indeed, for a given fixed power allocation of $i$'s opponents, $r_i(P_i,\textbf{P}_{-i})$ is increasing with respect to $P_i$.

First, at the ESE, $r_i(P_i^+,\textbf{P}_{-i}^+)= \theta_i$.

\noindent Suppose $\theta_i\leq\theta_j$, therefore, for $x \in \mathcal{P}^i $, we have 
\begin{equation}
g(x)=\log_2\left(1+\frac{x}{\eta+P^t-x}\right),
\end{equation}  
with $P^t=\sum\limits_{j\in \mathcal{U}}P_jh_j$. By deriving $g(.)$ with respect to $x$, we obtain
\begin{equation}
\frac{\partial g(x)}{\partial x}=\frac{\eta+P^t+2x}{\eta+P_t-x}\geq 0.
\end{equation}
Hence, when
\begin{equation}
\log_2\left(1+\frac{P_i^+h_i}{\eta+P^t-P_i^+h_i}\right)\leq \log_2\left(1+\frac{P_j^+h_j}{\eta+P_t-P_j^+h_j}\right), 
\end{equation}
we have $P_i^+h_i\leq P_j^+h_j$, which completes our proof.
\end{proof}

Next, we provide an analytical expression of ESE under existence conditions.
\begin{proposition}[Full characterization of ESE]\label{propo1}
Let $\mathcal{P}^i=[0,P_i^{max}]$ be the set of pure strategies of player $i$. The ESE of the game $\mathcal{G}$ exists if
 %
%

\begin{equation}\label{deterr}\Bigg\{
\begin{aligned}
\sum \limits_{i=1}^{N-1}\! h_i(1-2^{\theta_i})\!\!\prod \limits_{j=1,j\neq i}^{N} 2^{\theta_j}h_j\!\!+\!\! h_N\!\! \prod \limits_{j=1}^{N-1}2^{\theta_j}h_j\neq 0,&\\
0\leq P_N^+\leq P_N^{max} \text{ and } P_i^+\leq P_i^{max}, 1\leq i\leq N-1,&
\end{aligned}
\end{equation}

such that
\begin{equation}\label{soll}
\Bigg\{
\begin{aligned}
P_N^+ &=\frac{\eta(2^{\theta_N}-1)}{\sum \limits_{i=1}^{N-1}h_i\frac{1-2^{\theta_N}}{2^{\theta_N}h_N}\frac{2^{\theta_i}h_i}{1-2^{\theta_i}}+\frac{h_N}{1-2^{\theta_N}}}\\ 
P_i^+ &=\frac{1-2^{\theta_i}}{2^{\theta_i}h_i}\frac{2^{\theta_N}h_N}{1-2^{\theta_N}}P_N, 1\leq i\leq N-1
\end{aligned}
\end{equation}
Specifically, $P_N^+$ and $P_i^+,1\leq i\leq N-1$ are the players strategies at the ESE.
\end{proposition}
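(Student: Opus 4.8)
The plan is to let Proposition~\ref{uni} carry the conceptual load and then perform an explicit linear-algebra computation. By Proposition~\ref{uni}, once the SE set is known to be non-empty the ESE is the \emph{unique} profile $\textbf{P}^+$ with $r_i(\textbf{P}^+)=\theta_i$ for every $i\in\mathcal{U}$; hence it suffices to (i) solve this system in closed form, (ii) identify the first line of~(\ref{deterr}) as exactly the non-singularity condition for the system, and (iii) show that the feasibility part of~(\ref{deterr}) is what places the solution inside $\prod_i[0,P_i^{max}]$, so that $\textbf{P}^+$ is a bona fide SE and Proposition~\ref{uni} may be invoked without circularity.

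First I would linearise each equation. Writing $\gamma_i(\textbf{P})=\frac{h_iP_i}{\eta+P^t-h_iP_i}$ with $P^t=\sum_j h_jP_j$, the equality $r_i(\textbf{P})=\theta_i$ is equivalent to $\gamma_i(\textbf{P})=2^{\theta_i}-1$, and clearing the denominator gives
\begin{equation*}
h_iP_i=\frac{2^{\theta_i}-1}{2^{\theta_i}}\,(\eta+P^t)=:c_i\,(\eta+P^t),\qquad i\in\mathcal{U},
\end{equation*}
so every received power $h_iP_i$ is proportional to the single common quantity $\eta+P^t$. Summing over $i$ and using $\sum_i h_iP_i=P^t$ yields the scalar equation $P^t=(\eta+P^t)\sum_i c_i$, whose unique solution $\eta+P^t=\eta/(1-\sum_i c_i)$ exists precisely when $\sum_i c_i\neq1$. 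Multiplying $1-\sum_i c_i$ by the nonzero quantity $\prod_j 2^{\theta_j}h_j$ and re-expanding turns the inequality $\sum_i c_i\neq1$ into exactly the left-hand side of the first line of~(\ref{deterr}); equivalently, that quantity is, up to sign, the determinant of the coefficient matrix of the linear system $r_i(\textbf{P})=\theta_i$, $i\in\mathcal{U}$.

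The closed form then follows by back-substitution: $P_i^+=c_i(\eta+P^t)/h_i=\dfrac{c_i\,\eta}{h_i\bigl(1-\sum_j c_j\bigr)}$, and eliminating $\eta+P^t$ between the $i$-th and $N$-th relations gives $P_i^+=\dfrac{c_i}{c_N}\dfrac{h_N}{h_i}\,P_N^+$ for $1\le i\le N-1$; rewriting $c_i=\frac{2^{\theta_i}-1}{2^{\theta_i}}=-\frac{1-2^{\theta_i}}{2^{\theta_i}}$ and substituting these relations back into $P^t=\sum_j h_jP_j$ recovers the expressions displayed in~(\ref{soll}). To finish, note that all $c_i$ are nonnegative (strictly positive when $\theta_i>0$), so $P_i^+=\frac{c_i}{c_N}\frac{h_N}{h_i}P_N^+$ has the sign of $P_N^+$; hence the constraints $0\le P_N^+\le P_N^{max}$ and $P_i^+\le P_i^{max}$ of~(\ref{deterr}) guarantee $\textbf{P}^+\in\prod_i[0,P_i^{max}]$, and since $r_i(\textbf{P}^+)=\theta_i\ge\theta_i$ the profile $\textbf{P}^+$ is an SE; the SE set is therefore non-empty, Proposition~\ref{uni} applies, and $\textbf{P}^+$ is the unique ESE with the stated coordinates.

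The only real obstacle I expect is bookkeeping rather than mathematics: one must match the compact groupings of~(\ref{deterr})--(\ref{soll}) — the $1-2^{\theta_i}$ versus $2^{\theta_i}-1$ choices, and the $h_j$ factors that appear and then cancel under re-expansion — to the clean objects $c_i$ and $\eta+P^t$, and one must be scrupulous in \emph{deriving} the ``non-empty SE set'' hypothesis of Proposition~\ref{uni} from the feasibility conditions rather than assuming it, so that existence of an SE and the explicit form of the ESE are not established in a circle.
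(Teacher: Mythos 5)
Your proposal is correct and reaches the same characterization as the paper, but by a genuinely different computational route. The paper's appendix works directly on the $N\times N$ coefficient matrix: it forms $A$ with $a_{ii}=h_i$, $a_{ij}=(1-2^{\theta_i})h_j$, rescales rows to get $B$, applies the row operations $C=TB$ to compute $\det(A)=\sum_{i=1}^{N-1}h_i(1-2^{\theta_i})\prod_{j\neq i}2^{\theta_j}h_j+h_N\prod_{j=1}^{N-1}2^{\theta_j}h_j$, and then solves the system by successive row differences and partial sums to extract $P_N^+$ and the ratios $P_i^+/P_N^+$. You instead exploit the structure of the equations: $r_i(\textbf{P})=\theta_i$ is equivalent to $h_iP_i=c_i(\eta+P^t)$ with $c_i=(2^{\theta_i}-1)/2^{\theta_i}$, so summing collapses the whole system to a single scalar equation in $\eta+P^t$, uniquely solvable iff $\sum_i c_i\neq1$; and indeed $(1-\sum_i c_i)\prod_{j}2^{\theta_j}h_j$ expands exactly to the paper's determinant (the $i=N$ term and the full product regroup into $h_N\prod_{j=1}^{N-1}2^{\theta_j}h_j$), so your nonsingularity condition coincides with the first line of~(\ref{deterr}), and back-substitution yields the closed form together with $P_i^+=\frac{c_i}{c_N}\frac{h_N}{h_i}P_N^+$, which is the ratio relation in~(\ref{soll}). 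Your route buys two things the paper leaves implicit: the remark $c_i\ge0$ justifies why only the sign of $P_N^+$ need be conditioned (the paper asserts this without argument), and you are explicit about the logical order --- construct the solution, use the box constraints of~(\ref{deterr}) to certify it is an SE, and only then invoke Proposition~\ref{uni} --- so existence of an SE and the formula for the ESE are not argued circularly. One bookkeeping remark: your $P_N^+=c_N\eta/\bigl(h_N(1-\sum_j c_j)\bigr)$ agrees with what the paper's own elimination actually produces, namely a denominator $\sum_{i=1}^{N-1}h_i\frac{1-2^{\theta_i}}{2^{\theta_i}h_i}\frac{2^{\theta_N}h_N}{1-2^{\theta_N}}+\frac{h_N}{1-2^{\theta_N}}$; the fractions printed in~(\ref{soll}) have the indices $i$ and $N$ interchanged, which is a typo in the statement rather than a discrepancy in your derivation, so your claim to ``recover'' (\ref{soll}) holds for its corrected form.
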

%

\begin{proof}
Due to its length, the proof is moved to the Appendix~\ref{Appen}.
\end{proof}
\section{Random Channel}\label{Random}

In real wireless communication systems, the channel state information (CSI) is aquired through training sequencies (pilots). The CSI is estimated at the receiver and fed back to the transmitter. Even though in time division duplex (TDD) systems, the CSI can be estimated instantaneously at the transmitter using the reverse channel, the CSI can be noisy, and thus, it is best estimated statistically. Conversely, in frequency division duplex (FDD) systems, the channel can only be estimated statistically.

Furthermore, the assumption of perfect CSI is not grounded in reality especially for fast fading channels where the channels vary faster than the coherence time, and delay should be estimated.

Under realistic CSI assumptions, only expected throughput can be used as described by equation~(\ref{Expected}). 

Suppose $\mathcal{G}'=\{\mathcal{U},\{\mathcal{P}^i\}_{i \in\mathcal{U}},\{\bar{r}_i\}_{i \in\mathcal{U}},\{\theta_i\}_{i\in \mathcal{U}}, \mathbb{E}_{h_1,\dots,h_{{N}}}[.]\}$, an expected robust game with $\{\bar{r}_i\}_{i \in\mathcal{U}}$ as the players expected utilities. The expected values are over the joint random variables~${h_1,\dots,h_{{N}}}$ which are supposed independent and identically distributed (i.i.d.). We assume that all the channels are bounded and different from $0$, this is a practical assumption endorsed by the fact that radio channels take strict positive values. Hence, we have the following assumption:

\begin{assumption}\label{assump}
$\forall i \in \mathcal{U}, 0<h_i\leq h^{max}.$
\end{assumption}

Under Assumption~\ref{assump}, we define the \textit{robust satisfaction equilibrium} (RSE).

\begin{definition}[Robust satisfaction equilibrium]
$\textbf{P}^*$ is called a robust satisfaction equilibrium iff $\forall 0<h_i\leq h^{max}, \text{ such that } i \in \mathcal{U}, r_i(\textbf{P}^*)\geq \theta_i$.
\end{definition}

Hence, the RSE is a power allocation that enables to the users to be always satisfied regardless the state of their respective channels. The RSE is particularly distribution-free since it does not depend on the channels states. It is worth noting that when Assumption~\ref{assump} is not satisfied, mainly, if $\exists i \in \mathcal{U},\text{ }h_i=0$, all the strategies of player $i$ will be unfeasible, and hence the non-existence of an RSE.

Moreover, the existence of an RSE is a stringent condition as it is very restrictive to find a power allocation that satisfies all the players for all their channel states. At least, this power allocation could not be efficient. Thus, we define a less restrictive solution, namely the ``\textit{long term satisfaction equilibrium}'' (LTSE).

\begin{definition}[Long term satisfaction equilibrium]
$\textbf{P}^+$ is called a long term satisfaction equilibrium if it is a solution to the following problem
\begin{equation}\label{sollli}
\Bigg\{
\begin{aligned}
\min_P \mathbb{E}_{h_1,\dots,h_{{N}}}\left[(r_i(P)-\bar{r}_i(P))^2\right] ,&\\
\text{subject to } \forall \text{ } i\in \mathcal{U}, \bar{r}_i(P)\geq \theta_i.&
\end{aligned}
\end{equation}
 
\end{definition}

Indeed, in order to achieve satisfaction, we seek for power allocations that minimize the channels variance while \textit{averagely} respecting users requirements. This assumption is less restrictive than instantaneous satisfaction where users should be satisfied all the time. 

Reducing variance $Var(r_i(P))=\mathbb{E}_{h_1,\dots,h_{{N}}}\left[(r_i(P)-\bar{r}_i(P))^2\right]$ ensures small instantaneous throughput fluctuations around the expected throughput. Furthermore, in order to ensure energy efficiency, the expected throughput can be replaced by the users requirements. This is important to avoid energy wasting solutions that guarantee to the users higher payoffs. Hence, the efficient LTSE is given by the solution of the following problem:



\begin{equation}\label{sollli}
\Bigg\{
\begin{aligned}
\min_P \mathbb{E}_{h_1,\dots,h_{{N}}}\left[(r_i(P)-\theta_i)^2\right],&\\
\text{subject to } \forall \text{ } i\in \mathcal{U}, \bar{r}_i(P)= \theta_i.&
\end{aligned}
\end{equation}

Another way to consider energy efficiency in a fully distributed fashion is to solve the following constrained problem:

\begin{equation}\label{sollli2}
\Bigg\{
\begin{aligned}
\text{minimize       } P_i, &\\
\text{subject to } \forall \text{ } i\in \mathcal{U}, \bar{r}_i(P)= \theta_i.&
\end{aligned}
\end{equation}

\begin{proposition}
Let $\textbf{P}^+$ be the power allocation that minimizes the transmit power of each user and exactly guarantees their satisfaction (solution to the problem~(\ref{sollli2})), $\textbf{P}^+$ also ensures small variations around satisfactory levels (solution to the problem~(\ref{sollli})). 
\end{proposition}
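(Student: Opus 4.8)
The plan is to exploit the fact that problems~(\ref{sollli2}) and~(\ref{sollli}) are optimized over the \emph{same} constraint set --- the power profiles $\textbf{P}$ with $\bar r_i(\textbf{P})=\theta_i$ for every $i\in\mathcal U$ --- and to show that on this set the power-minimal profile is also the throughput-variance-minimal one. First I would rewrite the objective of~(\ref{sollli}): on the common feasible set the equality $\bar r_i(\textbf{P})=\theta_i$ is precisely $\mathbb E_{h_1,\dots,h_N}[r_i(\textbf{P})]=\theta_i$, so that $\mathbb E_{h_1,\dots,h_N}[(r_i(\textbf{P})-\theta_i)^2]=\mathbb E_{h_1,\dots,h_N}[(r_i(\textbf{P})-\mathbb E[r_i(\textbf{P})])^2]=\mathrm{Var}(r_i(\textbf{P}))$. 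Hence minimizing~(\ref{sollli}) is the same as minimizing the throughput variance $\mathrm{Var}(r_i(\textbf{P}))$ subject to $\bar r_i(\textbf{P})=\theta_i$, which is exactly the reading of the problem given in the paragraph preceding the statement.

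Next I would argue that, along the feasible set, lowering the transmit powers lowers $\mathrm{Var}(r_i(\textbf{P}))$. The structural reason is the fixed noise $\eta$ sitting in the denominator of the SINR: from $\gamma_i(\textbf{P})=\frac{h_iP_i}{\eta+\sum_{j\neq i}h_jP_j}$ one gets, after scaling the whole power vector by $\lambda\in(0,1]$, $\gamma_i(\lambda\textbf{P})=\frac{h_iP_i}{\eta/\lambda+\sum_{j\neq i}h_jP_j}$, so a smaller power budget behaves like a larger effective noise floor. Since $r_i=\log_2(1+\gamma_i)$ is an increasing concave function of an SINR whose random fluctuations (induced by the i.i.d.\ fading gains $h_j$) are damped as this effective noise grows, a lower power profile produces a throughput that concentrates more tightly around its mean; combined with the fact that the mean is pinned to $\theta_i$ on the feasible set, this shows that the power-minimal feasible profile $\textbf{P}^+$ from~(\ref{sollli2}) is also the variance-minimal one, i.e.\ it solves~(\ref{sollli}).

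The step I expect to be the main obstacle is making this monotonicity rigorous, because a feasibility-preserving perturbation is \emph{not} a pure scaling: decreasing one $P_i$ while keeping every $\bar r_i(\textbf{P})=\theta_i$ forces correlated adjustments of the remaining powers, so one cannot simply invoke the $\lambda$-scaling picture. The clean way around this is to observe, by an argument parallel to the proof of Proposition~\ref{uni}, that the system $\bar r_i(\textbf{P})=\theta_i$, $i\in\mathcal U$, already pins the feasible power profile down essentially uniquely; then~(\ref{sollli2}) and~(\ref{sollli}) are optimization problems over the same (singleton) feasible set and must therefore share the solution $\textbf{P}^+$, the variance computation above serving only to certify that this common solution is indeed the low-fluctuation one.
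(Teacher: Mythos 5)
Your opening step (on the common feasible set the objective of problem~(\ref{sollli}) is exactly the variance of $r_i$, since the equality constraint pins $\bar r_i(\textbf{P})$ to $\theta_i$) is fine, but the core of your argument has a genuine gap, and you flag it yourself. The monotonicity claim --- that along the set $\{\textbf{P}:\ \bar r_i(\textbf{P})=\theta_i,\ \forall i\in\mathcal U\}$ smaller powers give smaller $\mathrm{Var}(r_i(\textbf{P}))$ --- is precisely what has to be proved, and the $\lambda$-scaling/concavity picture does not cover feasibility-preserving perturbations, as you concede. Your fallback, that the system $\bar r_i(\textbf{P})=\theta_i$, $i\in\mathcal U$, pins $\textbf{P}$ down ``essentially uniquely'' so that both problems share a singleton feasible set, is not available ``by an argument parallel to Proposition~\ref{uni}'': the uniqueness and characterization results of Propositions~\ref{uni} and~\ref{propo1} rest on converting $r_i(\textbf{P})=\theta_i$ into a linear system through the substitution $2^{\theta_i}$, and that transformation is destroyed once the logarithm sits inside the expectation over the i.i.d.\ fading gains. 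No uniqueness statement for the expected-rate system is established in the paper, and without it your argument collapses back onto the unproven monotonicity step.

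The paper's own proof takes a different, more direct route: it bounds the deviation pointwise, $\left|r_i(P_i,\textbf{P}_{-i})-\theta_i\right|^2<(kP_i+c)^2$, invoking a Lipschitz property of $\log_2$ with respect to $P_i$, then takes expectations to obtain $\mathbb{E}_{h_1,\dots,h_N}\left[(r_i(\textbf{P})-\theta_i)^2\right]<(k'P_i+c')^2$, and concludes that driving $P_i$ down (problem~(\ref{sollli2})) also drives the expected squared deviation down (problem~(\ref{sollli})). In other words, the two objectives are tied together through an explicit bound that increases with the transmit power, not through uniqueness of the feasible point. To repair your write-up, either supply a genuine uniqueness proof for the expected-rate system (which would make the claim immediate but is nontrivial), or replace the scaling heuristic by a quantitative, power-dependent bound on the fluctuation in the spirit of the paper's Lipschitz estimate.
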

\begin{proof}
$\log_2(.)$ is a Lipschizian function with respect to $P_i$, therefore, $\exists 0<k<1, c\in \mathds{R}^+ $, $\forall i \in \mathcal{U}$
\begin{equation}
\left|r_i(P_i,\textbf{P}_{-i})-\theta_i\right|^2<(k P_i+c)^2.
\end{equation}
Consequently, $\exists 0<k'<1, c'\in \mathds{R}^+ $
\begin{equation}
\mathbb{E}_{h_1,\dots,h_{N}}\left[(r_i(\textbf{P})-\theta_i)^2\right]<(k' P_i+c')^2,
\end{equation}

Accordingly, when the transmit powers are reduced, $\mathbb{E}_{h_1,\dots,h_{N}}\left[(r_i(\textbf{P})-\theta_i)^2\right]$ is reduced as well, which completes our proof.
\end{proof}

Next, we define the probability of meeting a satisfactory equilibrium $\mathbb{P}(\exists P| \textbf{r}(\textbf{P})\geq \theta)$ as described in the following equation

\begin{equation}
\mathbb{P}(\exists P| \textbf{r}(\textbf{P})\geq \theta)=\mathbb{P}\{\exists P\in \prod\limits_{i \in \mathcal{U}}\mathcal{P}^i,\forall \text{ } i\in \mathcal{U}, r_i(\textbf{P})\geq \theta_i\}. 
\end{equation}

In a random context, in order to meet satisfactory equilibria very frequently, the probability $\mathbb{P}(\exists P| \textbf{r}(\textbf{P})\geq \theta)$ is to be maximized.

%

\begin{proposition}
Let $\mathbb{P}(\exists P|r_i(P)\geq \theta_i)$ be the probability to meet an LTSE, therefore

\begin{equation}
\mathbb{P}(\exists P|r_i(P)\geq \theta_i)\leq \min (1,\frac{Var(r_i(P))}{(\theta_i-\bar{r}_i(P))^2}),
\end{equation}
\end{proposition}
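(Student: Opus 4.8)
The plan is to recognize this as a direct consequence of Chebyshev's inequality applied to the random variable $r_i(\textbf{P})$, where the randomness comes from the i.i.d.\ channel gains $h_1,\dots,h_N$ while the power allocation $\textbf{P}$ is held fixed. For such a fixed $\textbf{P}$, the quantity $r_i(\textbf{P})=\log_2(1+\gamma_i(\textbf{P}))$ is a real-valued random variable with mean $\bar{r}_i(\textbf{P})=\mathbb{E}_{h_1,\dots,h_N}[r_i(\textbf{P})]$ and variance $Var(r_i(\textbf{P}))$, both finite since Assumption~\ref{assump} keeps the channels bounded away from $0$ and above.

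First I would dispose of the trivial half of the bound: any probability is at most $1$, which accounts for the ``$1$'' term in the minimum. It therefore suffices to prove $\mathbb{P}(\exists \textbf{P}\mid r_i(\textbf{P})\geq\theta_i)\leq \frac{Var(r_i(\textbf{P}))}{(\theta_i-\bar{r}_i(\textbf{P}))^2}$ in the non-degenerate regime $\theta_i>\bar{r}_i(\textbf{P})$; when $\theta_i\leq\bar{r}_i(\textbf{P})$ the right-hand side is either $\geq 1$ or has a vanishing denominator, and the statement reduces to the trivial bound. In the regime of interest set $t:=\theta_i-\bar{r}_i(\textbf{P})>0$.

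The key step is an inclusion of events. On $\{\,r_i(\textbf{P})\geq\theta_i\,\}$ we have $r_i(\textbf{P})-\bar{r}_i(\textbf{P})\geq t>0$, hence $|r_i(\textbf{P})-\bar{r}_i(\textbf{P})|\geq t$, so that
\begin{equation}
\{\,r_i(\textbf{P})\geq\theta_i\,\}\subseteq\{\,|r_i(\textbf{P})-\bar{r}_i(\textbf{P})|\geq t\,\}.
\end{equation}
By monotonicity of probability together with Chebyshev's inequality,
\begin{equation}
\mathbb{P}(r_i(\textbf{P})\geq\theta_i)\leq\mathbb{P}(|r_i(\textbf{P})-\bar{r}_i(\textbf{P})|\geq t)\leq\frac{Var(r_i(\textbf{P}))}{t^2}=\frac{Var(r_i(\textbf{P}))}{(\theta_i-\bar{r}_i(\textbf{P}))^2},
\end{equation}
and combining with $\mathbb{P}(\cdot)\leq 1$ yields the claimed $\min$.

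There is no genuinely hard part here; the only point requiring care is the reading of the ``$\exists \textbf{P}$'' in the statement. I would interpret it as the event that the candidate (efficient) LTSE allocation meets the per-user target $r_i(\textbf{P})\geq\theta_i$, so the bound is stated for that fixed $\textbf{P}$, and the argument above applies verbatim to any such fixed allocation. If instead one wanted a union over a finite discrete power grid, a union bound would introduce a multiplicative factor equal to the grid size; I would note this but not pursue it, since it is the per-allocation estimate that is used subsequently.
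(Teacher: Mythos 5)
Your proof is correct and follows essentially the same route as the paper: a Chebyshev-type bound yielding $\mathbb{P}\leq \frac{Var(r_i(P))}{(\theta_i-\bar{r}_i(P))^2}$, combined with the trivial bound $\mathbb{P}\leq 1$ to obtain the minimum. If anything, your version is slightly more careful, since the paper invokes an ``extended Chebyshev inequality for monotonically increasing functions'' with $\Psi(x)=(x-\bar{r}_i(P))^2$, which is not monotone, whereas your explicit event inclusion and the case distinction $\theta_i>\bar{r}_i(P)$ make the same estimate rigorous.
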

\begin{proof}
According to the extended version of the Chebyshev's inequality for monotonically increasing functions,

\begin{equation}\label{Cheb}
\mathbb{P}(\exists P|r_i(P)\geq \theta_i)\leq \frac{\mathbb{E}_{h_1,\dots,h_{{N}}}(\Psi(r_i(P)))}{\Psi(\theta_i)},
\end{equation}
with 
\begin{equation}\label{var}
\Psi(r_i(P))=\left(r_i(P)-\bar{r}_i(P)\right)^2.
\end{equation}
By replacing equation~(\ref{var}) in equation~(\ref{Cheb}), 

\begin{equation}\label{Bounded}
\mathbb{P}(\exists P|r_i(P)\geq \theta_i)\leq \frac{Var(r_i(P))}{(\theta_i-\bar{r}_i(P))^2}.
\end{equation}
Equation~(\ref{Bounded}) shows that the probability of meeting an SE is bounded by $\frac{Var(r_i(P))}{(\theta_i-\bar{r}_i(P))^2}$, mainly, by $\min_P  \frac{Var(r_i(P)) }{(\theta_i-\bar{r}_i(P))^2}$.

Hence, 
\begin{equation}
\mathbb{P}(\exists P|r_i(P)\geq \theta_i)\leq \min (1,\frac{Var(r_i(P))}{(\theta_i-\bar{r}_i(P))^2}),
\end{equation}  
which completes our proof.
\end{proof}

%


\section{Fully Distributed Satisfaction}\label{Fully}
In this section, we propose distributed schemes in order to reach the satisfaction efficiently for different cases: stationary channels and continuous power space, capacity discovery for stationary channels and continuous power space, stationary channels and discrete power spaces, and random channels and continuous power spaces.

%

\subsection{Learning for stationary channels}
\subsubsection{Banach-Picard algorithm}
In order to reach the ESE under the stationary channels assumption and continuous power spaces, we propose a distributed learning scheme described by the Banach-Picard algorithm, also called ``fixed point iterations''. The Banach-Picard algorithm is known to be convergent with a geometrical rate to the solution of the equation $w(x)=x$~\cite{TembineBook}. In order to converge to the unique fixed point of $w(.)$, function should be contractive. In our case, we suppose $w(x)=\frac{x}{\theta_i}r_i(x,\textbf{P}_{-i})$, which is a contractive function. Notice that when the fixed point is reached, mainly $\frac{x}{\theta_i}r_i(x,\textbf{P}_{-i})=x$, we obtain, $r_i(x,\textbf{P}_{-i})=\theta_i$ (we suppose that the fixed point is different from zero otherwise nodes cannot target their satisfactory QoS levels).

The Banach-Picard algorithm is described in Algorithm~\ref{alg:Picard}.
\begin{algorithm}[H]
  \caption{Banach-Picard algorithm
    }\label{alg:Picard}
     \begin{algorithmic}[1] 
					\State{\textbf{Parameters initialization}}
					\Statex{Each user picks randomly a transmit power with initial probabilities vector $P_{i}^{0}$}
					\Repeat
					\State{\textbf{Learning pattern}}
      \For{Each user $i$}
			   \Statex{Observe the value of its instantaneous throughput ${r_i^t(P_i^t,\textbf{P}_{-i})}$ }
			   \Statex{Update its power as follows:}
			   \Statex{$P_{i}^{t+1} \gets  P_{i}^t\frac{\theta_{i}}{r_i^t}$ }
			\EndFor
					\Until{The stopping criterion}
  \end{algorithmic}
\end{algorithm}
$P_{i}^{t}$ is the power of user $i$ at instant $t$, and ${r_i^t(P_i^t,\textbf{P}_{-i})}$  is the throughput of player $i$ at instant $t$ when its opponents choose $\textbf{P}_{-i}$. The stopping criterion can be formulated as ``all players have reached their target throughput''. It can also be described by $P_i^t$ convergence rate: when $\left|P_i^{t+1}-P_i^t\right|<\rho$, with $\rho$ goes to zero, the algorithm stops.

It is important to note that Banach-Picard algorithm, and all the algorithms we propose in this section, are fully distributed since no information about the other players strategies, their throughput, or channels states is needed. Players need only to observe their own payoffs in order to pick up the best strategy efficiently.  

\subsubsection{Progressive Banach-Picard algorithm for capacity discovery}
When achieving the best performance is more important than energy consumption, an adapted version of Banach-Picard algorithm, referred to as the \textit{progressive Banach-Picard algorithm for capacity discovery}, can be used. Indeed, in the capacity discovery algorithm, once a user's request (demand) is reached, the user increases its demand slightly, and adjusts its power accordingly in order to reach its new target. This process allows to the users to maximize their payoffs reaching a constrained Nash equilibrium (maximizing their payoffs subject to their initial requirements constraints) and avoiding any under-utilization of the network. The progressive Banach-Picard algorithm for capacity discovery is described in Algorithm~\ref{alg:ban}.

\begin{algorithm}[H]
  \caption{Progressive Banach-Picard algorithm for capacity discovery
    \label{alg:ban}}
     \begin{algorithmic}[1] 
					\State{\textbf{Parameters initialization}}
					\Statex{Each user picks randomly a transmit power with initial probabilities vector $P_{i}^{0}$}
					\Repeat
					\State{\textbf{Learning pattern}}
      \For{Each user $i$}
			\Statex{Observe the value of its instantaneous throughput ${r_i^t(P_i^t,\textbf{P}_{-i})}$ }
			     \Statex{$P_{i}^{t+1} \gets  P_{i}^t\frac{\theta_{i}}{r_i^t}$ }
					\If{$\theta_i$ is reached and $\sum \limits_{i}\theta_i<C$}
					\Statex{Update its demand as follows:}
					\Statex{$\theta_{i}\gets\theta_i+\epsilon$}
					\EndIf
				   \EndFor
					\Until{The stopping criterion}
  \end{algorithmic}
\end{algorithm}
where the parameter $\epsilon \in [0,1]$ is the throughput step size.

%
%
%
\subsubsection{Adapted Bush-Monsteller algorithm}

In order to reach an SE in a distributed fashion under stationary channels assumption and discrete power spaces, we propose a modified version of the Bush-Monsteller Algorithm~\cite{Bush,scha} referred to as ``\textit{Adapted Bush-Monster algorithm}''. The Bush-Monsteller Algorithm is a well-known algorithm that converges to an NE. The modified version we propose reaches to the ESE of game $\mathcal{G}$. 

The adapted Bush-Monsteller algorithm is based on a stochastic approximation approach. Nodes should update their strategies following a probability distribution in order to learn. In the algorithm we propose, probabilities are updated  with respect to the response of the environment. We denote by~$p_{{ik}}$ the probability that MS $i$ chooses the power level $k$ (we suppose a discrete power space with a given fixed number of power levels for each user $\mathcal{P}^i=\{P_i^1,\dots,P_i^{max}\}$). Power strategies and utilities are suffixed by~$t$ to refer to their values at time step $t$. The probabilities are calculated as follows

\begin{equation}\label{proba}p_{{bk}}^{{t+1}} = \left\{ 
\begin{array}{l l}
  p_{{ik}}^{{t}}-\zeta \frac{\left|\theta_i-r_{i}^{t}(\textbf{P}^t)\right|}{\max_{{t'}\leq {t}}(\left|\theta_i-r_{i}(\textbf{P}^{t'})\right|)} p_{{ik}}^{{t}} &  \text{if $P_{i} \neq P_i^{k}$}\\
  p_{{ik}}^{{t}}+\zeta \frac{\left|\theta_i-r_{i}^{t}(\textbf{P}^t)\right|}{\max_{{t'}\leq {t}}(\left|\theta_i-r_{i}(\textbf{P}^{t'})\right|)}\sum \limits_{{k'}\neq {k}}p_{{ik'}}^{{t}} &  \text{otherwise}\\ \end{array} \right., \end{equation}\par
\noindent where
%
$\zeta$ is a parameter in $\left[0,1\right]$ also called \textit{the step size of the probability updating rule}. The results of the algorithm are more accurate when $\epsilon$ is close to zero~\cite{Bush}. $\zeta$ can be different from a player to another in order to allow heterogeneous learning speed.

Hence, adapted Bush-Monsteller iterates as described in~\ref{alg:FLAPH}:
\begin{algorithm}[H]
  \caption{ELRI algorithm
    \label{alg:FLAPH}}
     \begin{algorithmic}[1] 
					\Statex{\textbf{Parameters initialization}}
					\State{Each user picks randomly a power level with initial probabilities vector $p_{ik}^{0}$}
					\Repeat
					\Statex{\textbf{Learning pattern}}
      \For{Each user $i$}
			\For{Each power state $k$}
			\If{$P_{i} \neq P^{k} $}
			     \State{$p_{ik}^{t+1} \gets  p_{{ik}}^{{t}}-\zeta \frac{\left|\theta_i-r_{i}^{t}(\textbf{P}^t)\right|}{m_t}$ }
				   \Else
				   \State{$p_{ik}^{t+1} \gets   p_{{ik}}^{{t}}+\zeta \frac{\left|\theta_i-r_{i}^{t}(\textbf{P}^t)\right|}{m_t}\sum \limits_{{k'}\neq {k}}p_{{ik'}}^{{t}}$ }
				\EndIf
			     \State{Update $m_t$}
					\If{$\left|\theta_i-r_{i}^{t}(\textbf{P}^t)\right|>m_t$}
					\State{$m_{t+1}\gets \left|\theta_i-r_{i}^{t}(\textbf{P}^t)\right|$}
					\EndIf
				   \EndFor
					\EndFor
					\Until{The stopping criterion}
  \end{algorithmic}
\end{algorithm}

It is to be noted that the adapted Bush Monsteller will converge to the ESE only if it starts near from the ESE and does not meet an NE on its trajectory, otherwise it would be trapped in an NE.

\subsection{Learning for random channels}

When the channels are random, particularly when they are fast varying over time, we would like to track the efficient LTSE online. In contrast to the stationary case, the learning process cannot be based on the instantaneous throughput only. Players should estimate their throughput at each iteration and adjust their estimated throughput accordingly. Furthermore, when the channel fading is fast, the estimated throughput is not a contractive function since it depends on both powers and channel states. Consequently, the classical Banach-Picard updating rule cannot ensure convergence to the fixed point. To deal with this matter, Mann iterates are proposed~\cite{Mann1953}. Mann iterates modify Banach-Picard updating rule by taking an average on the previous power state and the Banach-Picard rule. This learning process, as described in~\ref{alg:Mann}, is proved to converge to the fixed point for functions that are not necessarily contractive. 

\begin{algorithm}
  \caption{Mann iterates for random channels
    }\label{alg:Mann}
     \begin{algorithmic}[1] 
					\State{\textbf{Parameters initialization}}
					\Statex{Each user picks randomly a transmit power with initial probabilities vector $P_{i}^{0}$}
					\Statex{$\hat{r}_i^{0}(P_i^t,\textbf{P}_{-i})$ is initiated}
					\Repeat
					\State{\textbf{Learning pattern}}
      \For{Each user $i$}
			   \Statex{Update its power as follows:}
			   \Statex{$P_{i}^{t+1} \gets (1-\lambda) P_i^t+\lambda P_{i}^t\frac{\theta_{i}}{\hat{r}_i^t}$ }
							   \Statex{Forecast the expected value of its throughput ${\hat{r}_i^{t+1}\!(P_i^t,\textbf{P}_{-i})\!=\!\hat{r}_i^{t}(P_i^t,\textbf{P}_{-i})\!+\!\mu(\hat{r}_i^{t}(P_i^t,\textbf{P}_{-i})\!-\!{r}_i^{t}(P_i^t,\textbf{P}_{-i})\!)}$ }


			\EndFor
					\Until{The stopping criterion}
  \end{algorithmic}
\end{algorithm}
with $(\lambda,\mu) \in [0,1]^2$ enough small in order to achieve convergence. Notice that when $\lambda =1$, the Mann iterates coincide with Banach-Picard algorithm. 
\section{Simulation Results}\label{Simulations}
Here, we turn to present some representative simulations by which we validate our analysis and show the performance of the proposed algorithms.
\subsection{QoS requirement and channel gain effects}
First, we suppose stationary channels and continuous power states spaces. We use Banach-Picard algorithm to reach the ESE. Consider Fig.~\ref{QoS} which depicts the transmit power evolution and subsequent throughput for each user. As the channel gain is supposed fixed ($h_i=1$ for all $i$), final power decisions (Fig.~\ref{QoS}~(a)) follow the QoS requirements. For example, user~$3$, who is the most demanding among users, transmits with the highest power. This remark is supported by Proposition~\ref{order} which states that the received powers order follows the users requirements order. It can also be seen from Fig.~\ref{QoS}~(b) that the final throughput of each user matches perfectly its initial demand. Besides, the ESE is met in only a few ten iterations. Notice that the some of transmit powers is set to the minimum at the end of the iterations which goes in the same direction with our prediction: the ESE minimizes the sum of transmit powers among all the satisfaction equilibria.  

\begin{figure}
\begin{tabular}{c}
\psfrag{Transmit power}[c][c][0.7]{Transmit power}
\psfrag{Elapsed time}[c][c][0.7]{Elapsed time}
\includegraphics[width=8cm,height=4cm]{./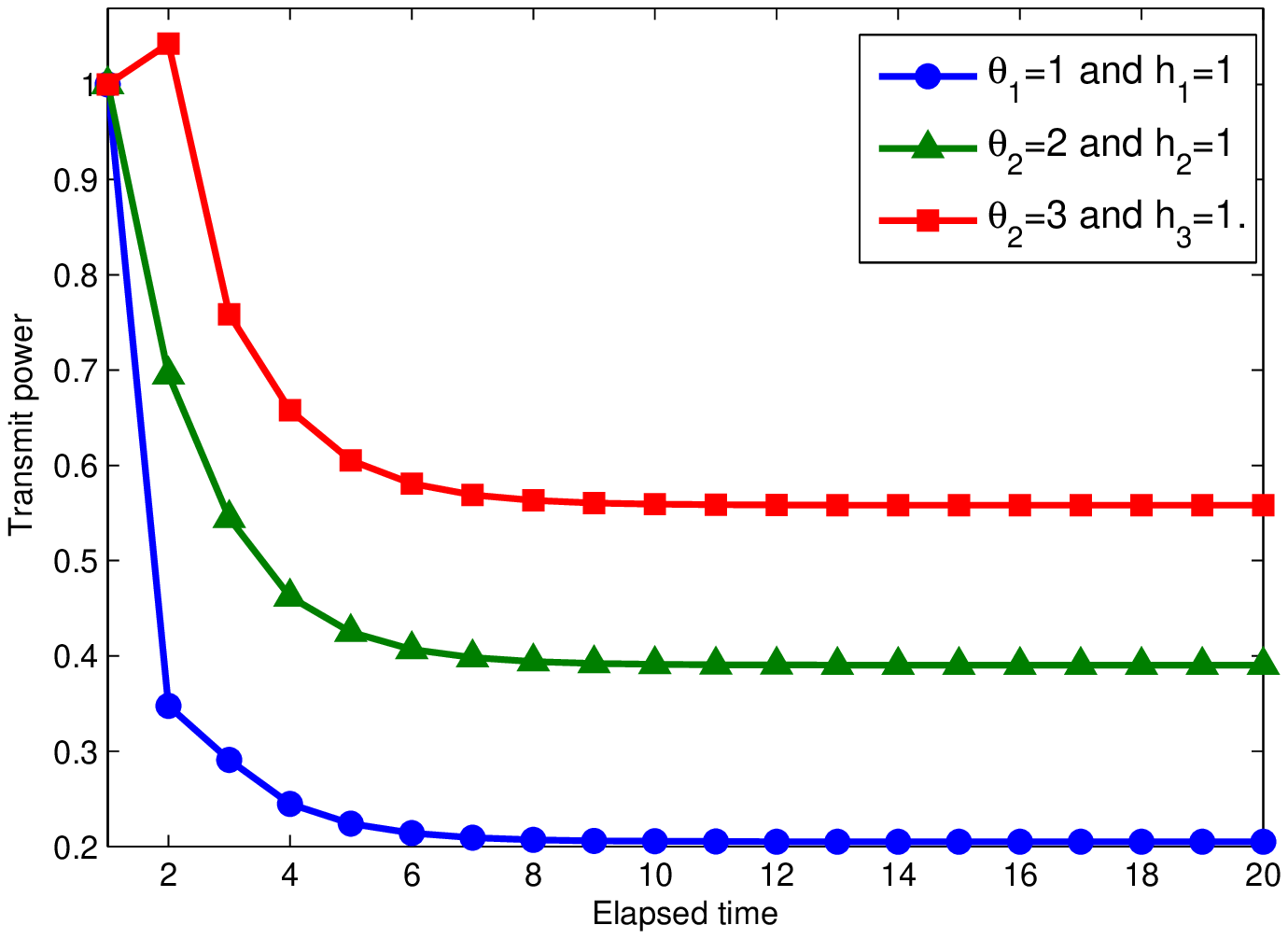}

\\
(a)
\\
\psfrag{Average throughput}[c][c][0.7]{Instantaneous throughput}
\psfrag{Elapsed time}[c][c][0.7]{Elapsed time}
\includegraphics[width=8cm,height=4cm]{./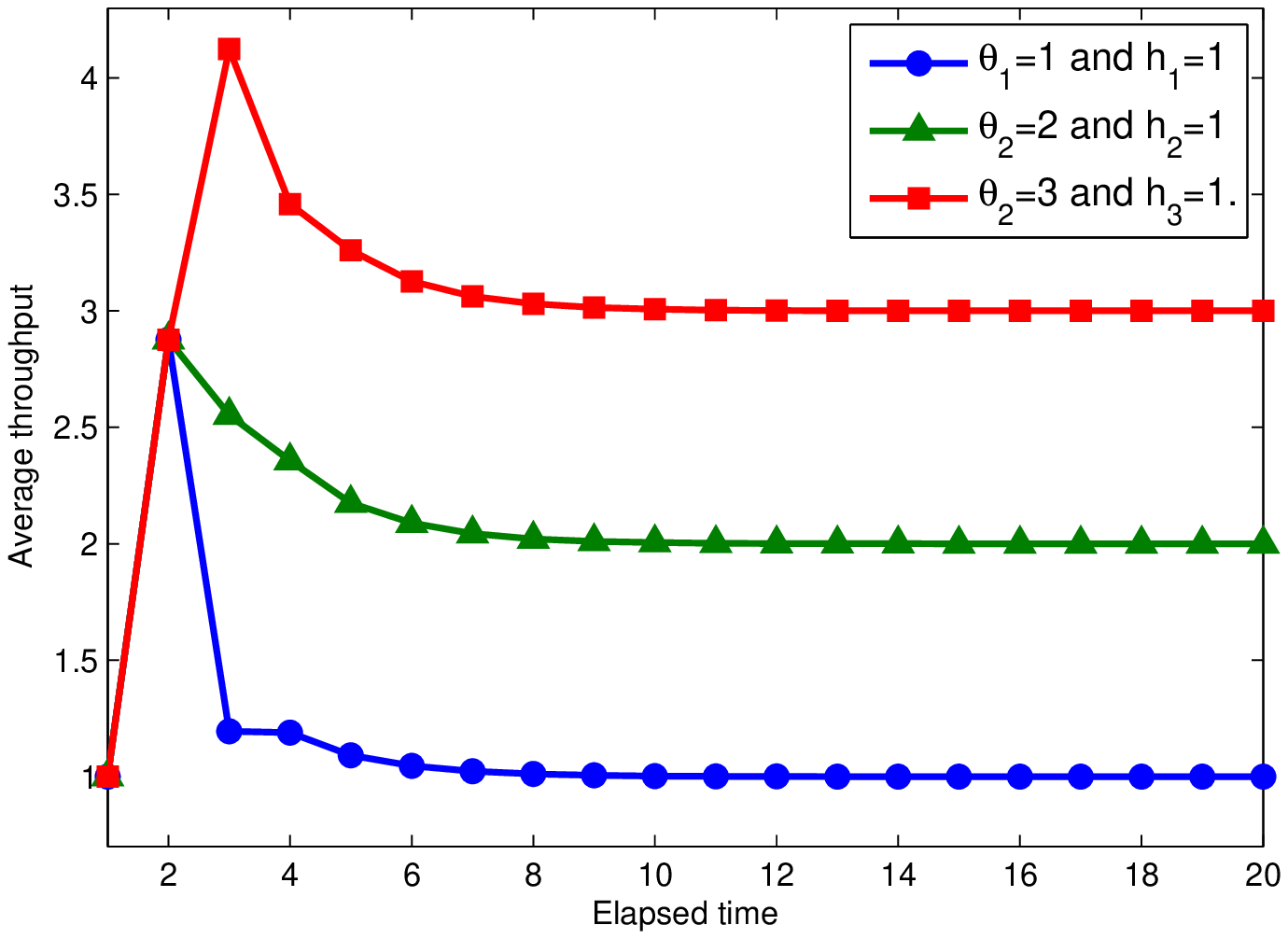}
\\
(b)
\end{tabular}
\caption{QoS requirement effect. Figure (a) represents the transmit powers decision evolution over time for $3$ users. The resulting throughput is given by Figure (b). The starting power allocation is $(1,1,1)[mW]$, and the maximum network capacity $C=10$.}
\label{QoS}
\end{figure}
Similarly, in Fig.~\ref{Channell}, we suppose that the QoS requirements (users demands) are fixed, and notice that the transmit powers order is inverted when compared to the channel gains order. This is explained by Proposition~\ref{order}, and is also intuitively expected: the noisier the channel state is, the more transmit power is needed to reach the satisfaction levels. The figure shows also that the users requirements are exactly met by the end of the iterations, and the energy cost is optimized.  
\begin{figure}
\begin{tabular}{c}
\psfrag{Transmit power}[c][c][0.7]{Transmit power}
\psfrag{Elapsed time}[c][c][0.7]{Elapsed time}
\includegraphics[width=8cm,height=4cm]{./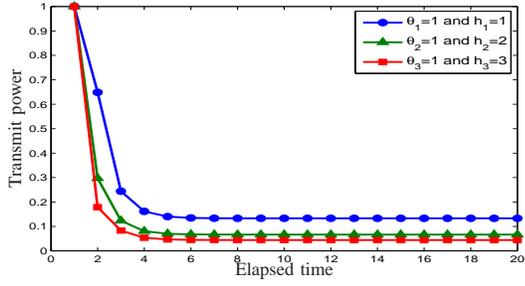}
\\
(a)\\
\psfrag{Average throughput}[c][c][0.7]{Instantaneous throughput}
\psfrag{Elapsed time}[c][c][0.7]{Elapsed time}

\includegraphics[width=8cm,height=4cm]{./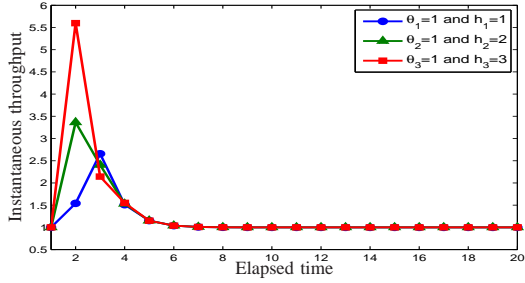}
\\
(b)\\
\end{tabular}
\caption{Channel gain effect. Figure (a) represents the transmit powers decision evolution over time for $3$ users. The resulting throughput is given by Figure (b). The starting power allocation is $(1,1,1)[mW]$, and the maximum network capacity $C=10$.}
\label{Channell}
\end{figure}

\begin{figure}
\begin{tabular}{c}
\psfrag{Transmit power}[c][c][0.7]{Transmit power}
\psfrag{Elapsed time}[c][c][0.7]{Elapsed time}
\includegraphics[width=8cm,height=4cm]{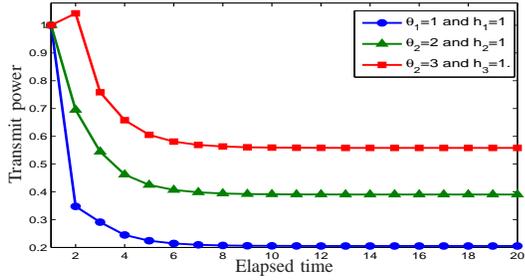}

\\
(a)
\\
\psfrag{Average throughput}[c][c][0.7]{Instantaneous throughput}
\psfrag{Elapsed time}[c][c][0.7]{Elapsed time}
\includegraphics[width=8cm,height=4cm]{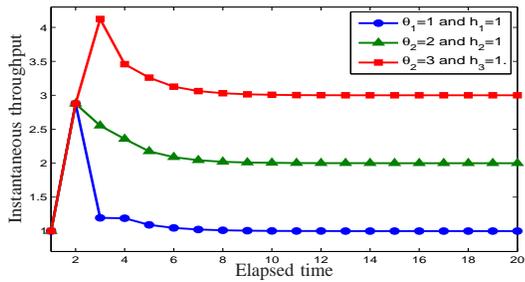}
\\
(b)
\end{tabular}
\caption{QoS requirement effect. Figure (a) represents the transmit powers decision evolution over time for $3$ users. The resulting throughput is given by Figure (b). The starting power allocation is $(1,1,1)[mW]$, and the maximum network capacity $C=10$.}
\label{QoS}
\end{figure}
\subsection{Adapted Bush-Monsteller algorithm}
When the channel states are stationary and the power levels are discrete, one can use the adapted Bush-Monsteller algorithm in order to reach the ESE.
Consider Fig.~\ref{Channel22} which depicts the algorithm convergence for $3$ users. From this figure, it can be seen that each player reaches its throughput requirement after a given number of iterations. It is to be noted that as the power states space is discrete, powers that exactly match the users requirements may not exist, and thus, users requirements are reached approximately. Mixed strategies can also be tracked by an adjustment in the learning pattern of the proposed algorithm using the Linear Reward Penalty algorithm~\cite{scha}.

\begin{figure}[t]
\begin{tabular}{c}
\psfrag{Instantaneous throughput}[c][c][0.7]{Instantaneous throughput}
\psfrag{Elapsed time}[c][c][0.7]{Elapsed time}
\includegraphics[width=8cm,height=4cm]{./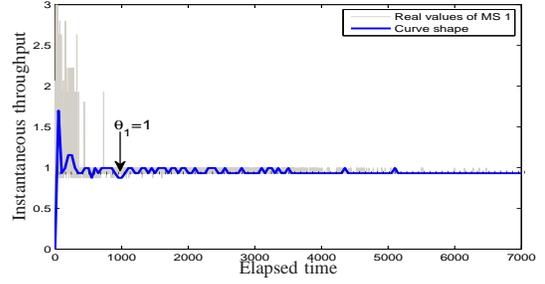}
\\
(a) \small{Throughput of user $1$} \\
\psfrag{Instantaneous throughput}[c][c][0.7]{Instantaneous throughput}
\psfrag{Elapsed time}[c][c][0.7]{Elapsed time}

\includegraphics[width=8cm,height=4cm]{./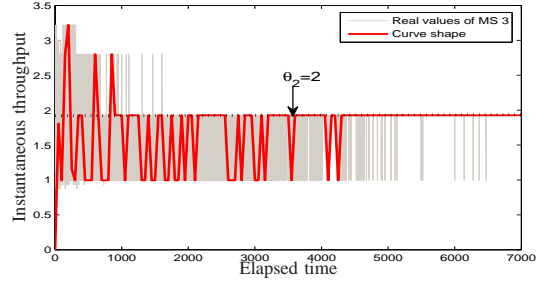}
\\
(b)\small{Throughput of user $2$}\\

\\
\psfrag{Instantaneous throughput}[c][c][0.7]{Instantaneous throughput}
\psfrag{Elapsed time}[c][c][0.7]{Elapsed time}
\includegraphics[width=8cm,height=4cm]{./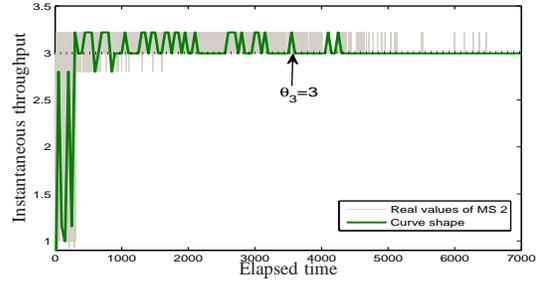}
\\
(c) \small{Throughput of user $3$}\\
\end{tabular}
\caption{Throughput convergence to the target demand while using adapted Bush-Monsteller algorithm. The starting power allocation is $(1,1,1)[mW]$, and the maximum network capacity $C=10$. The power states space are the same for all users, mainly $\{0.1,0.2,0.3\}$.The final power allocation is $(0.1,0.2,0.3)[mW]$}
\label{Channel22}
\end{figure}
Fig.\ref{Channel25} describes the probabilities of choosing a given power level for MS $1$. Clearly, as final power decision of the user is $P_1=0.1$, the probability to choose $P_1=0.1$ converges to $1$, and the other probabilities that are related to the other power states fail to $0$.

\begin{figure}[t]
\psfrag{Probability}[c][c][0.7]{Probability}
\psfrag{Elapsed time}[c][c][0.7]{Elapsed time}

\includegraphics[width=8cm,height=4cm]{./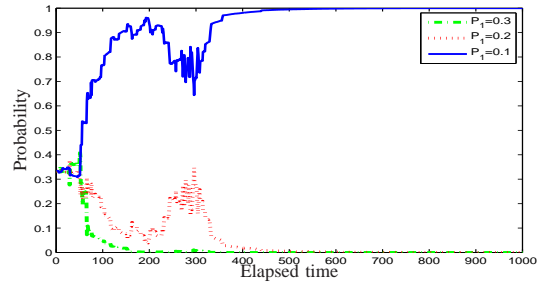}
\caption{The adapted Bush-Monsteller algorithm probabilities related to user $1$ with respect to its possible power states $\{0.1,0.2,0.3\}$. The starting power allocation is $(1,1,1)[mW]$, and the maximum network capacity $C=10$.}
\label{Channel25}

\end{figure}

\subsection{Random channel}
\subsubsection{Stationary block channels}
In this subsection, we suppose that the channels remain stationary over a given coherence time (a block) and change from one block to another. We use Banach-Picard iterations to select among continuous sets of powers the ESE. Here we highlight the tractable property of Banach-Picard algorithm. For each time block, the users requirements are reached. The channels states vary within iteration blocks as it can be seen in Fig.~\ref{Channel}.(b). The ESE is tracked for each time block. Particularly, in Fig.~\ref{Channel}.(a) the powers are adjusted (lowered for the first time block of 10 ms) in order to meet the users demands in Fig.~\ref{Channel}.(c). However, these results can not hold true when the time block is shorter than the convergence time of Banach-Picard algorithm.    

\begin{figure}
\begin{tabular}{c}
\psfrag{Transmit power}[c][c][0.7]{Transmit power}
\psfrag{Elapsed time}[c][c][0.7]{Elapsed time}
\includegraphics[width=8cm,height=4cm]{./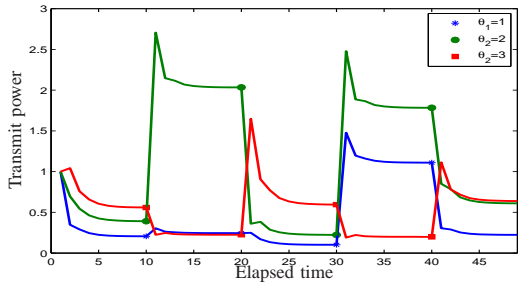}
\\
(a)\\
\psfrag{Channel gain}[c][c][0.7]{Channel gain}
\psfrag{Elapsed time}[c][c][0.7]{Elapsed time}

\includegraphics[width=8cm,height=4cm]{./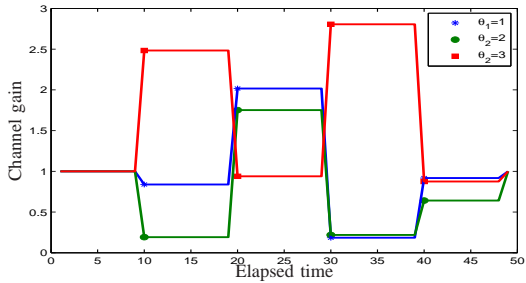}
\\
(b)\\

\\
\psfrag{Average throughput}[c][c][0.7]{Instantaneous throughput}
\psfrag{Elapsed time}[c][c][0.7]{Elapsed time}

\includegraphics[width=8cm,height=4cm]{./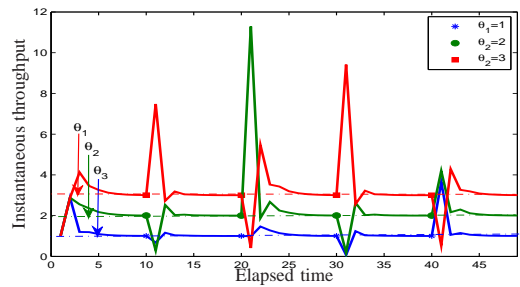}
\\
(c)\\
\end{tabular}
\caption{Stationary block channels. Channels vary every $10$ iterations. Figure (a) represents the transmit powers decision evolution over time for $3$ users. The resulting throughput is given by Figure (b). The starting power allocation is $(1,1,1)[mW]$, and the maximum network capacity $C=10$.}
\label{Channel}
\end{figure}
\subsubsection{Fast fading channels}
In order to cope with fast varying channels, Mann iterates algorithm is used. Fig.~\ref{ChannelRand}.(c) shows that the algorithm reaches the efficient LTSE in long run iterations. The variations around the users requirements are barely canceled at the end of the iterations. Notice that channels are completely random as depicted in Fig.~\ref{ChannelRand}.(b) following an exponential distribution. The updated transmit powers are stabilized at the end of the running time as it can be seen in Fig.~\ref{ChannelRand}.(a).
\begin{figure}
\begin{tabular}{c}
\psfrag{Transmit power}[c][c][0.7]{Transmit power}
\psfrag{Elapsed time}[c][c][0.7]{Elapsed time}
\includegraphics[width=8cm,height=4cm]{./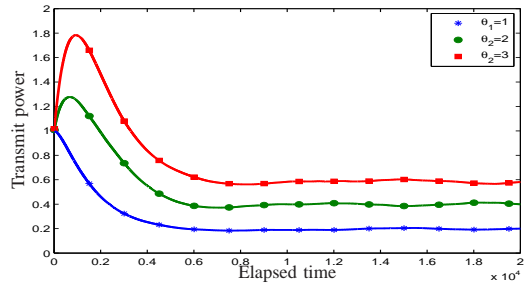}
\\
(a)\\
\psfrag{Channel gain}[c][c][0.7]{Channel gain}
\psfrag{Elapsed time}[c][c][0.7]{Elapsed time}

\includegraphics[width=8cm,height=4cm]{./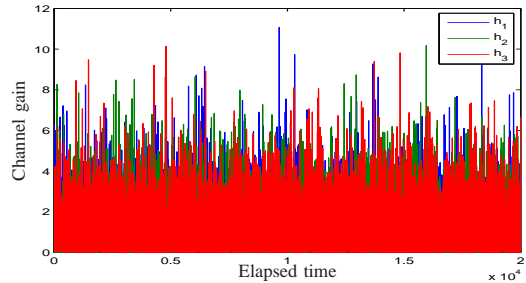}
\\
(b)\\

\\
\psfrag{Average throughput}[c][c][0.7]{Expected throughput}
\psfrag{Elapsed time}[c][c][0.7]{Elapsed time}

\includegraphics[width=8cm,height=4cm]{./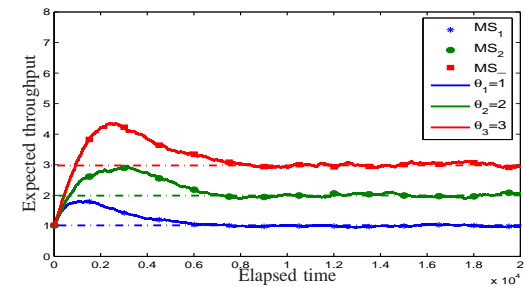}
\\
(c)\\
\end{tabular}
\caption{Random channel gains follow an exponential distribution of mean and variance equal to $1$. Figure (a) represents the transmit powers decision evolution over time for $3$ users. The resulting throughput is given by Figure (c). Figure (c) gives the channels variation over time. The starting power allocation is $(1,1,1)[mW]$, and the maximum network capacity $C=10$.}
\label{ChannelRand}
\end{figure}

%
%
%


\subsection{Algorithm recovery}
Here, we turn to show the recovery property of Banach-Picard iterations. By the recovery property, we mean the algorithm ability to adapt its results when the environment parameters change. Indeed, if a user changes its demand over time a lower number of iterations should be needed in order to return back to the ESE allocation. Fig.~\ref{Heal} shows that user $1$ increases its demand at $t=20$. This demand is quickly satisfied and the algorithm converges rapidly, again, to the new ESE.

\begin{figure}
\begin{tabular}{c}
\psfrag{Transmit power}[c][c][0.7]{Transmit power}
\psfrag{Elapsed time}[c][c][0.7]{Elapsed time}
\includegraphics[width=8cm,height=4cm]{./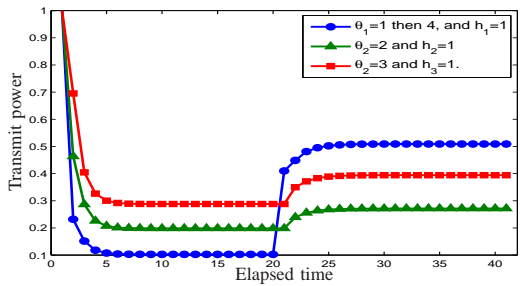}\\
(a)\\
\psfrag{Average throughput}[c][c][0.7]{Average throughput}
\psfrag{Elapsed time}[c][c][0.7]{Elapsed time}
\includegraphics[width=8cm,height=4cm]{./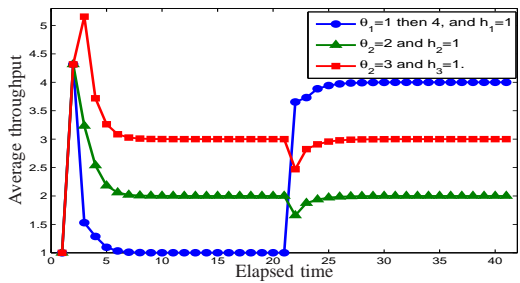}\\
(b)\\
\end{tabular}
\caption{Algorithm recovery. Figure (a) represents the transmit powers decision evolution over time for $3$ users. The resulting throughput is given by Figure (b). The first user increases its demand at $t=20$. The starting power allocation is $(1,1,1)[mW]$, and the maximum network capacity $C=10$.}
\label{Heal}
\end{figure}


\subsection{Capacity discovery}

Here, we are interested in maximizing the payoffs using the progressive Banach-Picard algorithm for capacity discovery.
Fig.~\ref{CNE} plots the transmit powers and payoffs evolution within time. It can be seen from the figure that users reach their optimal payoffs by the end of the iterations. It is worth noting that the sum of their final requirements corresponds to the maximum network capacity. 
\begin{figure}[t]
\begin{tabular}{c}
\psfrag{Transmit power}[c][c][0.7]{Transmit power}
\psfrag{Elapsed time}[c][c][0.7]{Elapsed time}
\includegraphics[width=8cm,height=3.8cm]{./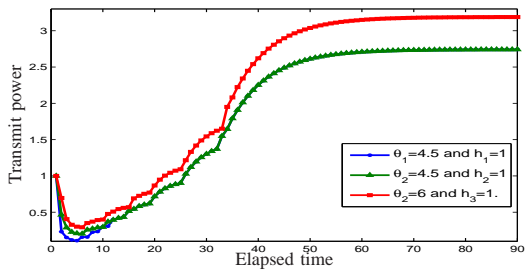}\\
(a)\\
\psfrag{Average throughput}[c][c][0.7]{Average throughput}
\psfrag{Elapsed time}[c][c][0.7]{Elapsed time}
\includegraphics[width=8cm,height=3.8cm]{./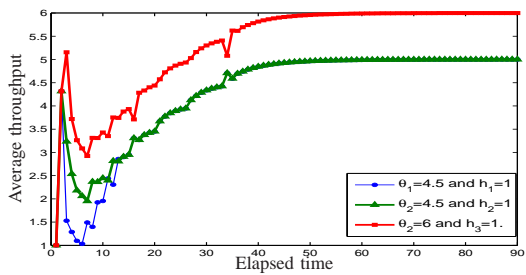}\\
(b)\\
\end{tabular}
\caption{Throughput maximization using the modified Banach-Picard algorithm. Figure (a) represents the transmit powers decision evolution over time for $3$ users. The resulting throughput is given by Figure (b). The starting power allocation is $(1,1,1)[mW]$, and the maximum network capacity $C=14$.}
\label{CNE}
\end{figure}
\subsection{Constraints feasibility}
In this section, we suppose, that the sum of initial users requirements exceeds the network capacity. Fig.~\ref{Band} depicts the users behaviors when their requirements ($\sum \limits_{i=1}^3 \theta_i=6$) go beyond the network capacity ($C=3$). Hence, the users constraints are unfeasible (inequality~(\ref{Unfeasible}) is not satisfied). It can be seen from Fig.~\ref{Band}~(a) that users have to transmit with an infinite power in order to reach their satisfaction performance. Furthermore, even with the highest power levels, users are not able to achieve their request as it is depicted in Fig.~\ref{Band}~(b) where the final throughput of each user is below its satisfaction. The aggressive behavior of users, particularly transmitting with high powers, can be compared to the behavior of prisoners' dilemma players, in which the players behave in discordance to what is expected.

\begin{figure}[t]
\begin{tabular}{c}
\psfrag{Transmit power}[c][c][0.7]{Transmit power}
\psfrag{Elapsed time}[c][c][0.7]{Elapsed time}

\includegraphics[width=8cm,height=4cm]{./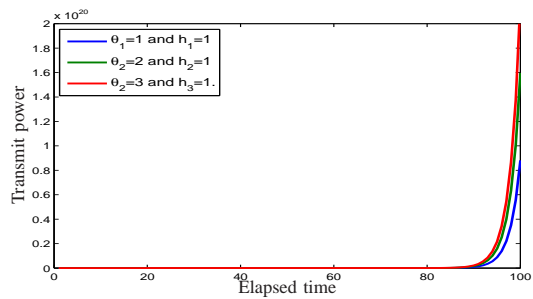}
\\
(a)
\\
\psfrag{Average throughput}[c][c][0.7]{Average throughput}
\psfrag{Elapsed time}[c][c][0.7]{Elapsed time}
\includegraphics[width=8cm,height=4cm]{./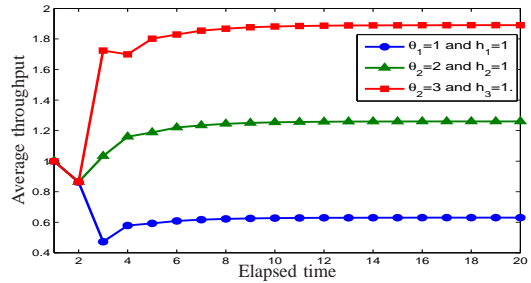}
\\
(b)
\\
\end{tabular}
\caption{Constraints feasibility. Figure (a) represents the transmit powers decision evolution over time for $3$ users. The resulting throughput is given by Figure (b). The starting power allocation is $(1,1,1)[mW]$, and the maximum network capacity $C=3$.}
\label{Band}
\end{figure}
\section{Conclusion}\label{conclu}
In this paper, we proposed a satisfaction equilibrium approach in order to reduce energy consumption for users in self-organizing networks. We were mainly interested in the efficient satisfaction equilibrium for both stationary and fast fading channels. We showed that we studied conditions
of existence and uniqueness of ESE under a stationary channel
assumption and fully characterized the ESE by the solution of a linear system. Moreover, we proved that at 
the ESE no player can increase its QoS without degrading the overall
energy performance. When the channels are fast varying over time the efficient long term satisfaction equilibrium solution is proposed and characterized. In order to reach to the efficient solutions, four fully distributed algorithms are proposed, namely: Banach-Picard algorithm, the progressive Banach-Picard algorithm for capacity discovery, the efficient linear reward algorithm, and Mann iterates algorithm. The convergence of the proposed algorithms is shown through simulation results and their qualitative properties are described are well. 

\appendices 
\section{Proof of the Proposition 4:}\label{Appen}
\begin{proof}
 The ESE is the solution of the linear system of equations:$\forall i \in \mathcal{U}$
\begin{equation}\label{throu}
r_i(\textbf{P})=\theta_i.
\end{equation}.
The solution exists if
\begin{itemize}
	\item The determinant is different from zero.
	\item The power solutions are positive.
	\item The power solutions do not exceed the maximum transmit power allowed to each user. 
\end{itemize}
First, we compute the determinant of the linear system. We replace equation~(\ref{th}) in equation~(\ref{throu}), after a few computations, we obtain, $\forall i \in \mathcal{U}$ 
\begin{equation}
P_ih_i-(2^{\theta_i}-1)\sum \limits_{j\neq i} P_jh_j=(2^{\theta_i}-1)\eta.
\end{equation}
Let $A$ be an $N\times N$ matrix, such that $A=(a_{ij})_{1\leq i,j\leq N}$, with $a_{ii}=h_i$ and for $i\neq j$ $a_{ij}=(1-2^{\theta_i})h_j$

\[A\!\!=\!\!\left(
\begin{matrix}

h_1 & (1\!\!-\!\!2^{\theta_1})h_2 & \dots& (1\!\!-\!\!2^{\theta_1})h_N\\
(1\!\!-\!\!2^{\theta_2})h_1 & h_2 &  \dots &(1\!\!-\!\!2^{\theta_2})h_N\\
\dots & \dots & h_i  & (1\!\!-\!\!2^{\theta_i})h_N\\
(1\!\!-\!\!2^{\theta_N})h_1 & (1\!\!-\!\!2^{\theta_N})h_2 & \dots& h_N\\

\end{matrix}\right).\]

Let $B=diag_{1 \leq i\leq N}(\frac{1}{1-2^{\theta_i}})$ 
\[B\!\!=\!\!\left(
\begin{matrix}

\frac{h_1}{(1-2^{\theta_1})} & h_2 & \dots& h_N\\
h_1 & \frac{h_2}{(1-2^{\theta_2})} &  \dots &h_N\\
\dots & \dots & \frac{h_i}{(1-2^{\theta_i})}  & h_N\\
h_1 & h_2 & \dots& \frac{h_N}{(1-2^{\theta_N})}\\

\end{matrix}\right)\]
The determinant of $B$ is given by
\begin{equation}
det(B)=\frac{1}{\prod \limits_{i=1}^{N}(1-2^{\theta_i})}det(A).
\end{equation}
Let $C=TB$ with $T=I_N-\sum \limits_{i=1}^{N-1}E_{i,i+1}$, where $I_N$ is the identity matrix, and $E_{i,i+1}$ is an elementary matrix where all components are equal to $0$ except the $\{i,i+1\}$th element.
\[C\!\!=\!\!\left(
\begin{matrix}

\frac{2^{\theta_1}h_1}{(1-2^{\theta_1})} & \frac{-2^{\theta_2}h_2}{(1-2^{\theta_2})} & 0&\dots& 0\\
0 & \frac{2^{\theta_2}h_2}{(1-2^{\theta_2})} & \frac{-2^{\theta_3}h_3}{(1-2^{\theta_3})}&  \dots &0\\
0 & 0 & \frac{2^{\theta_{i}}h_i}{(1-2^{\theta_i})} &\frac{-2^{\theta_{i+1}}h_{i+1}}{(1-2^{\theta_{i+1}})} &0\\
0 & \dots & 0 &\frac{2^{\theta_{i+1}}h_{i+1}}{(1-2^{\theta_{i+1}})} &0\\
h_1 & h_2 & \dots& \dots&\frac{h_N}{(1-2^{\theta_N})}\\

\end{matrix}\right).\]

%
%
%
Thus, we have

\begin{equation}
det(C)=det(B).
\end{equation}
From the computation of the determinant of $C$, it follows 
\begin{equation}\label{66}
\begin{aligned}
det(B)&=\left(\!\!\sum \limits_{i=1}^{N-1}\! h_i\!\!\prod \limits_{j=1,j\neq i}^{N} \!\!\frac{2^{\theta_j}h_j}{1-2^{\theta_j}}\!\!+\!\! \frac{h_N}{1-2^{\theta_N}}\!\! \prod \limits_{j=1}^{N-1} \frac{2^{\theta_j}h_j}{1-2^{\theta_j}}\!\!\right),\\
\end{aligned} 
\end{equation}

Finally,
\begin{equation}
\begin{aligned}
det(A)&=\left(\!\!\sum \limits_{i=1}^{N-1}\! h_i\!\!\prod \limits_{j=1,j\neq i}^{N} \!\!\frac{2^{\theta_j}h_j}{1-2^{\theta_j}}\!\!+\!\! \frac{h_N}{1-2^{\theta_N}}\!\! \prod \limits_{j=1}^{N-1} \frac{2^{\theta_j}h_j}{1-2^{\theta_j}}\!\!\right)\times\\
&\prod \limits_{i=1}^{N}(1-2^{\theta_i}) \\
& =\sum \limits_{i=1}^{N-1}\! h_i(1-2^{\theta_i})\!\!\prod \limits_{j=1,j\neq i}^{N} 2^{\theta_j}h_j\!\!+\!\! h_N\!\! \prod \limits_{j=1}^{N-1}2^{\theta_j}h_j\\
&\neq 0.
\end{aligned} 
\end{equation}

The solution of the linear system is given by (extensive computations are given in the appendix)

\begin{equation}
\Bigg\{
\begin{array}{l l}
P_N^+ &= \frac{\eta(2^{\theta_N}-1)}{\sum \limits_{i=1}^{N-1}h_i\frac{1-2^{\theta_N}}{2^{\theta_N}h_N}\frac{2^{\theta_i}h_i}{1-2^{\theta_i}}+\frac{h_N}{1-2^{\theta_N}}}\\
 P_i^+&=\frac{1-2^{\theta_i}}{2^{\theta_i}h_i}\frac{2^{\theta_N}h_N}{1-2^{\theta_N}}P_N.\\
\end{array}
\end{equation}
It is worth mentioning that the sign of $P_N^+$ is the same of $P_i^+$, the reason why we only condition $P_N^+$ sign. The solution of the system is accepted only if it fits into the power states set. 

We solve the following linear system:


\begin{equation}
\Bigg\{
\begin{array}{l l}
h_1P_1^+ + \dots+ (1\!\!-\!\!2^{\theta_1})h_N P_N^+ &=\eta( 2^{\theta_1}-1)\\
(1\!\!-\!\!2^{\theta_2})h_1 P_1^++  \dots +(1\!\!-\!\!2^{\theta_2})h_N P_N^+&= \eta(2^{\theta_2}-1)\\
\dots & =\dots\\
(1\!\!-\!\!2^{\theta_N})h_1P_1^+  + \dots+ h_NP_N^+&=\eta(2^{\theta_N}-1)\\
\end{array}
\end{equation}
We divide each $i$th line  by $\eta( 2^{\theta_i}-1)$,
\begin{equation}
\Bigg\{
\begin{array}{l l}
\frac{h_1}{(1-2^{\theta_1})}P_1^+ + h_2P_2^+ + \dots+ h_NP_N^+&=-1\\
h_1P_1^+ + \frac{h_2}{(1-2^{\theta_2})}P_2^+ +  \dots +h_NP_N^+&=-1\\
\dots & =\dots\\
h_1P_1^+ + h_2P_2^+ + \dots+ \frac{h_N}{(1-2^{\theta_N})}P_N^+&=\eta(2^{\theta_N}-1).\\
\end{array}
\end{equation}
For the first $N-1$ lines, we deduce the $i+1$th line from the $i^{th}$ line, and we obtain

\begin{equation}
\Bigg\{
\begin{array}{l l}

\frac{2^{\theta_1}h_1}{(1-2^{\theta_1})}P_1^+ + \frac{-2^{\theta_2}h_2}{(1-2^{\theta_2})}P_2^+ &= 0\\
 \frac{2^{\theta_2}h_2}{(1-2^{\theta_2})}P_2^+ + \frac{-2^{\theta_3}h_3}{(1-2^{\theta_3})}P_3^+&= 0\\
 \frac{2^{\theta_{i}}h_i}{(1-2^{\theta_i})}P_i^+ +\frac{-2^{\theta_{i+1}}h_{i+1}}{(1-2^{\theta_{i+1}})}P_{i+1}^+ &=0\\
\dots&=\dots\\
h_1P_1^++h_2P_2^++ \dots+\frac{h_N}{(1-2^{\theta_N})}P_N^+&=\eta(2^{\theta_N}-1).\\

\end{array}
\end{equation}

We replace each $i$th line by the sum from the first to the $i$th line, 
\begin{equation}
\Bigg\{
\begin{array}{l l}

\frac{2^{\theta_1}h_1}{(1-2^{\theta_1})}P_1^+ + \frac{-2^{\theta_N}h_N}{(1-2^{\theta_N})}P_N^+ &= 0\\
 \frac{2^{\theta_2}h_2}{(1-2^{\theta_2})}P_2^+ + \frac{-2^{\theta_N}h_N}{(1-2^{\theta_N})}P_N^+&= 0\\
 \frac{2^{\theta_{i}}h_i}{(1-2^{\theta_i})}P_i^+ +\frac{-2^{\theta_N}h_N}{(1-2^{\theta_N})}P_N^+ &=0\\
\dots&=\dots\\
h_1P_1+h_2P_2+ \dots+\frac{h_N}{(1-2^{\theta_N})}P_N^+&=\eta(2^{\theta_N}-1)\\

\end{array}
\end{equation}

By replacing in the last line, we find $P^+_N$
\begin{equation}
\Bigg\{
\begin{array}{l l}
\frac{2^{\theta_1}h_1}{(1-2^{\theta_1})}P_1^+ + \frac{-2^{\theta_N}h_N}{(1-2^{\theta_N})}P_N^+ &= 0\\
 \frac{2^{\theta_2}h_2}{(1-2^{\theta_2})}P_2^+ + \frac{-2^{\theta_N}h_N}{(1-2^{\theta_N})}P_N^+&= 0\\
 \frac{2^{\theta_{i}}h_i}{(1-2^{\theta_i})}P_i^+ +\frac{-2^{\theta_N}h_N}{(1-2^{\theta_N})}P_N^+ &=0\\
\dots&=\dots\\
(\sum \limits_{i=1}^{N-1}h_i\frac{1-2^{\theta_N}}{2^{\theta_N}h_N}\frac{2^{\theta_i}h_i}{1-2^{\theta_i}}+\frac{h_N}{1-2^{\theta_N}})P_N^+&=\eta(2^{\theta_N}-1).\\
\end{array}
\end{equation}
Finally, we obtain,

\begin{equation}
\Bigg\{
\begin{array}{l l}
P_N^+ &= \frac{\eta(2^{\theta_N}-1)}{\sum \limits_{i=1}^{N-1}h_i\frac{1-2^{\theta_N}}{2^{\theta_N}h_N}\frac{2^{\theta_i}h_i}{1-2^{\theta_i}}+\frac{h_N}{1-2^{\theta_N}}}\\
 P_i^+&=\frac{1-2^{\theta_i}}{2^{\theta_i}h_i}\frac{2^{\theta_N}h_N}{1-2^{\theta_N}}P_N^+.\\
\end{array}
\end{equation}
Which completes our proof.
\end{proof}

\balance


\end{document}